\documentclass[conference]{IEEEtran}
\usepackage{graphicx}
\usepackage{amsthm}
\usepackage{epsfig}
\usepackage{latexsym}
\usepackage{amsfonts}
\usepackage{here}
\usepackage{rawfonts}
\usepackage[latin1]{inputenc}
\usepackage[T1]{fontenc}
\usepackage{calc}
\usepackage{capitalgreekitalic}
\usepackage{url}
\usepackage{enumerate}
\usepackage{color}
\usepackage[tbtags]{amsmath}
\usepackage{amssymb}
\usepackage{upref}
\usepackage{epic,eepic}
\usepackage{times}
\usepackage{dsfont}
\usepackage{comment}
\usepackage{cite}
\usepackage{bbm} 
\usepackage{amsmath}
\usepackage{microtype} 
\usepackage{afterpage}
\usepackage{makecell}
\usepackage{multirow}

\newtheorem{condition}{\bf Condition}
\newtheorem{corollary}{Corollary}

\usepackage{dsfont}

\newcounter{step}
\newlength{\totlinewidth}
\newenvironment{algorithm}{%
  \rule{\linewidth}{1pt}
  \begin{list}{}%
    {\usecounter{step}%
      \settowidth{\labelwidth}{\textbf{Step 2:}}%
      \setlength{\leftmargin}{\labelwidth}%
      \setlength{\topsep}{-2pt}%
      \addtolength{\leftmargin}{\labelsep}%
      \addtolength{\leftmargin}{2mm}%
      \setlength{\rightmargin}{2mm}%
      \setlength{\totlinewidth}{\linewidth}%
      \addtolength{\totlinewidth}{\leftmargin}%
      \addtolength{\totlinewidth}{\rightmargin}%
      \setlength{\parsep}{0mm}%
      \raggedright}}%
  {\end{list}%
  \rule{\linewidth}{1pt}}
\newcounter{substep}

  {\end{list}}

\newlength{\aligntop}
\newlength{\alignbot}
 

\IEEEoverridecommandlockouts

\usepackage{algorithm}
\usepackage{algorithmic}

\makeatletter
\newcommand\semihuge{\@setfontsize\semihuge{19.3}{25}}
\makeatother

\makeatletter
\newcommand\semismall{\@setfontsize\semihuge{12.4}{15}}
\makeatother

\usepackage{subfigure}

\begin{document}

\title{Optimizing Reinforcement Learning Training over Digital Twin Enabled Multi-fidelity Networks 
}

\author{\large{Hanzhi Yu, \textit{Graduate Student Member IEEE}, 
Hasan Farooq, Julien Forgeat, Shruti Bothe,}\\
\large{Kristijonas Cyras, Md Moin Uddin Chowdhury, 
and Mingzhe Chen, \textit{Senior Member IEEE}} \\

\thanks{Hanzhi Yu and Mingzhe Chen are with the Department of Electrical and Computer Engineering, University of Miami, Coral Gables, FL 33146 USA (Emails: \protect\url{{hanzhiyu, mingzhe.chen}@miami.edu}).} 
\thanks{Hasan Farooq, Julien Forgeat, Shruti Bothe, Kristijonas Cyras, and Md Moin Uddin Chowdhury are with the Ericsson Research, Santa Clara, CA, 95054, USA, Emails: \protect\url{{hasan.farooq, julien.forgeat, shruti.bothe, md.moin.uddin.chowdhury}@ericsson.com}. } 
\thanks{Kristijonas Cyras is now with the European AI Office, European Commission, Brussels, Belgium, Email: \protect\url{kristijonas.cyras@ec.europa.eu}. } 
}
\maketitle
%
\begin{abstract}
In this paper, we investigate a novel digital network twin (DNT) assisted deep learning (DL) model training framework. 
In particular, we consider a physical network where a base station (BS) uses several antennas to serve multiple mobile users, and a DNT that is a virtual representation of the physical network. The BS must adjust its antenna tilt angles to optimize the data rates of all users. Due to user mobility, the BS may not be able to accurately track network dynamics such as wireless channels and user mobilities. Hence, a reinforcement learning (RL) approach is used to dynamically adjust the antenna tilt angles. To train the RL, we can use data collected from the physical network and the DNT. The data collected from the physical network is more accurate but incurs more communication overhead compared to the data collected from the DNT. Therefore, it is necessary to determine the ratio of data collected from the physical network and the DNT to improve the training of the RL model. 
We formulate this problem as an optimization problem whose goal is to jointly optimize the tilt angle adjustment policy and the data collection strategy, aiming to maximize the data rates of all users while constraining the time delay introduced by collecting data from the physical network. To solve this problem, we propose a hierarchical RL framework that integrates robust adversarial loss and proximal policy optimization (PPO). The first level robust-RL dynamically adjusts the antenna tilt angles, and the second level PPO determines the ratio of data collected from the physical network to improve the first level robust-RL training performance. Compared to traditional single level RL algorithms such as deep Q network (DQN), the designed method optimizes the data collection ratio and the antenna tilt angles at diverse time intervals, allowing the second level PPO to adjust the data collection ratio with a large timescale using the training information provided by the first level robust-RL. 
Simulation results show that our proposed method reduces the physical network data collection delay by up to 28.01\% and 1$\times$ compared to a hierarchical RL that uses vanilla PPO as the first level RL, and the baseline that uses robust-RL at the first level and selects the data collection ratio randomly. 
\end{abstract}

\begin{IEEEkeywords}
Digital network twin, antenna tilt angle adjustment, robust reinforcement learning. 
\end{IEEEkeywords}

\section{Introduction}\label{se:intro}
Deep learning (DL) enables wireless network systems to learn from data, adapt to wireless network dynamics (e.g., wireless channel conditions), and make intelligent decisions that enhance network performance and efficiency \cite{8869705, 10198573, 8742579, 9165550, wei2025optimizing, 10891537}. However, deploying deep neural network models over wireless networks requires collecting a large amount of real world data, which is both time and energy consuming \cite{9252924, 10283592}. To address this issue, digital network twin (DNT) technology, which creates a real time virtual representation of physical network environments, has been introduced \cite{9839640}. A DNT can facilitate DL model training by providing simulated network data, thereby reducing the burden of extensive real world data collection \cite{10628026, 9263396}. However, using DNT for training DL models still faces several challenges, such as the analysis of the impacts of inaccurate DNT data on DL model training, optimization of the tradeoff between using the physical network data that is accurate but is time consuming to collect, and using the DNT data that is less time consuming to collect but inaccurate for DL model training \cite{10198573, jia2022accurate, 9833928, 9854866}.

{\color{black}Recent studies have explored the integration of DNT and DL methods for network decision making and system optimization. In \cite{9451579}, the authors proposed a multi agent reinforcement learning (MARL) method to optimize vehicle clustering and computational task offloading in a vehicular network, where an adaptive digital twin (DT) is used to provide real time network information thus improving the MARL performance. In \cite{10235999}, the authors designed a graph attention-based MARL method to jointly optimize task offloading and service caching policies for all users, and used a DNT to provide simulated network state information to support the training of the RL agents. The authors in \cite{10711852} used a DT to provide a pre-initialized policy for online RL in admission control, reducing exploration cost in real network environments. In \cite{10236461}, the authors introduced a DNT assisted federated learning framework and used the DNT to provide real time internet of things (IoT) device information to support IoT device clustering and model training task offloading during the federated learning training. In \cite{10697404}, the authors proposed a DT driven interactive environment in which a multi agent PPO framework learns coordinated beamwidth and tilt control policies online, enabling large scale cellular networks to optimize coverage and throughput without relying on costly real world exploration. However, most of these studies \cite{9451579, 10235999, 10711852, 10236461, 10697404} assumed that the DT can provide accurate network information for DL model training, and do not consider the impacts of inaccurate DT data on model training. Meanwhile, these works \cite{9451579, 10235999, 10711852, 10236461, 10697404} did not consider the tradeoff between collecting data from the physical network and the DT during training a DL model. Since the physical network data and the DNT generated data differ in fidelity and collection cost, understanding the tradeoff between them is crucial for developing practical DL based applications.

Several current studies in \cite{10071987, 10081084, 10286016, 8891763} have investigated how to train a DL model on noisy or uncertain input data. In particular, the authors in \cite{10071987} proposed a framework that injects noise into input data during the model training process to improve the robustness of DL models against data uncertainty caused by channel estimation errors and wireless dynamics (e.g., fading in wireless channels). In \cite{10081084}, the authors proposed a Bayesian DL framework to estimate both data and model uncertainty, making risk-aware decisions for wireless modulation classification and location tasks. The authors in \cite{10286016} proposed a robust channel estimation framework combining a neural network based pilot optimizer and channel estimator, jointly trained by samples generated from properly compensated channel and noise covariance matrices perturbed by modeled estimation errors. Finally, in \cite{8891763}, the authors proposed a neural network based modulation classification framework, that uses a phase-based reordering preprocessing method and a multi-layer long short-term
memory (LSTM) network with attention mechanism to maintain model effectiveness for data with uncertain noise. However, most of these works \cite{10071987, 10081084, 10286016, 8891763} do not consider the communication overhead (e.g., uplink transmission overhead) of collecting accurate and noisy data thus simplifying the model training process. Specifically, one can use DL methods or digital twins to generate noisy data without any data collection overhead. However, for accurate training, we must collect it from the actual network with corresponding communication overhead. Hence, it is necessary to develop a robust training strategy that minimize the communication overhead of accurate data collection, balancing efficiency with model accuracy. }

{\color{black} The main contribution of this work is a novel DNT-assisted wireless DL training framework that allows BSs to dynamically choose training data from both the physical network and the DNT based on network dynamics and training settings, so as to optimize DL model training for physical network performance improvement. 
The key contributions are summarized as follows: 
\begin{itemize}
    \item We consider a DNT enabled network consisting of a physical network with a base station (BS) serving multiple mobile users, and a DNT that is a virtual representation of the physical network. Due to the mobility of the users, the BS cannot accurately capture the dynamics of user movement. Thus, we employ a RL method to dynamically control the antenna tilt angles to maximize user data rates. To train the RL model, the BS can use data collected from either the physical network or the DNT. Data collected from the physical network is more accurate than that from the DNT, but incurs higher communication overhead. Therefore, it is necessary to 
    determine the ratio of data collected from the physical network and the DNT to improve the training of the RL model, so as to optimize the tilt angle of each antenna in response to user mobility. 
    We formulate this problem as an optimization problem aiming to maximize the data rates of all users while considering the delay introduced by collecting data from the physical network. 
    
    \item 
    To address this problem, we introduce a hierarchical RL framework that integrates a robust adversarial loss-RL with a proximal policy optimization (PPO) method. The first level robust-RL dynamically controls the tilt angles of the BS using both noisy and accurate data collected from the DNT and the physical network. The second level PPO adjusts the ratio of the data collected from the physical network and the DNT to support the training of the robust-RL. Compared to traditional policy gradient methods, the robust-RL employs a new loss function that accounts for the worst-case policy. By utilizing the new loss function, the robust-RL enhances model robustness against data noise caused by DNT, allowing more DNT data to be used for model training and reducing the overhead of collecting data from a physical network. Compared to traditional single layer RL algorithms, the designed hierarchical RL method optimizes different optimization variables (i.e., the data collection ratio and the antenna tilt angles) at different temporal resolutions by allowing the second level PPO to optimize long-term strategic parameters (i.e., the data collection ratio), while the first level policy focuses on short-term operational decisions (i.e., the antenna tilt angles). 

    \item We analyze the convergence of the second level RL in our proposed hierarchical RL framework. The analytical result shows that the second level RL converges to approximate stationarity in expectation. 
    
\end{itemize}}
Simulation results show that our proposed method reduces the physical network data collection delay by up to 28.01\% and 1$\times$ compared to a hierarchical RL that uses vanilla PPO as the first level RL, and the baseline that uses robust-RL at the first level and selects the data collection ratio randomly. 

The rest of this paper is organized as follows. The system model and problem formulation are introduced in Section \ref{se:system}. The design of the hierarchical RL framework is introduced in Section \ref{se:solution}. The convergence of our proposed method is analyzed in Section \ref{se:analysis}. Simulation results are presented and discussed in Section \ref{se:simulation}. Finally, conclusions are drawn in Section \ref{se:conclusion}.

\section{System Model and Problem Formulation}\label{se:system}
{\color{black} We consider a DNT enabled cellular network consisting of: 1) a physical network including a BS that 
serve a set $\mathcal{U}$ of $U$ mobile users in a set $\mathcal{C}$ of $C$ cells, 
and 2) a DNT of the physical network that is generated and controlled by a cloud server, as it shows in Fig. \ref{fig:system_model}. In the considered model, the users in each cell are served by an antenna of the BS such that the BS totally has $C$ antennas. \cite{bouton2021coordinated} The BS must adjust its antenna tilt angles to optimize the data rate of mobile users. Due to the mobility of the users, the BS cannot accurately learn the dynamics of user mobility. Thus, we utilize a RL method to dynamically control the antenna tilt angles. To train the RL model, the BS can use the data collected from the physical network and the DNT. Here, the data collected from the physical network is more accurate compared to the data collected from the DNT. However, collecting data from the physical network will introduce more communication overhead. Therefore, it is necessary to optimize the RL training data collection strategy (e.g., from the physical network or the DNT).} Next, we first introduce the mobility model of the users. Then, we describe the communication transmission model. Subsequently, we introduce the RL method that the BS uses to control the tilt angles. Given the defined physical network and the implemented RL, we introduce the model of the DNT. Finally, we formulate the optimization problem of RL training via data collection. 

\begin{figure}[t]
  \begin{center}
    \includegraphics[width= 0.95\linewidth]{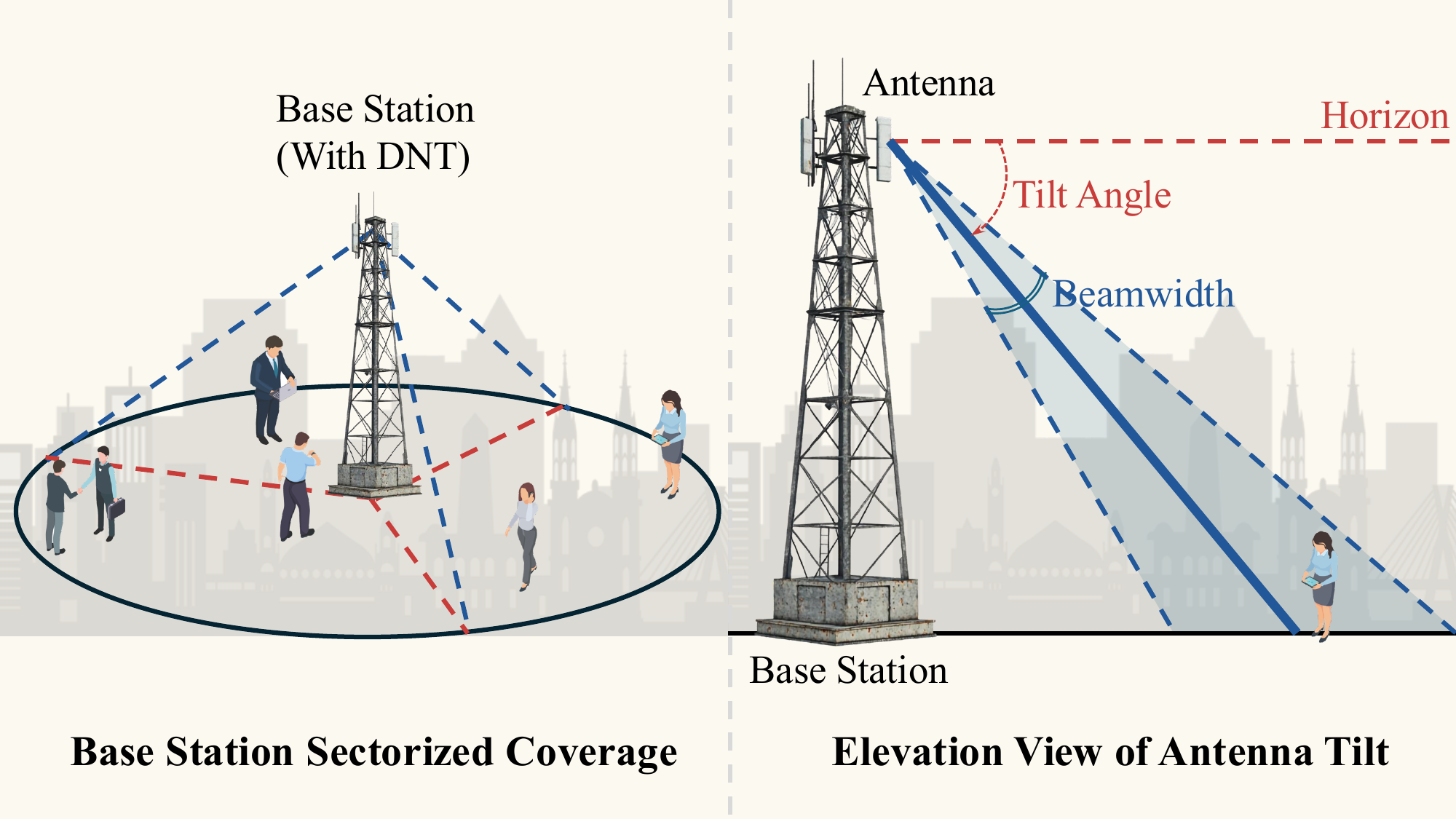}
    \caption{The considered DNT enabled cellular network. } 
    \label{fig:system_model}
    \vspace{-0.5cm}
  \end{center}
\end{figure}

\subsection{Mobility Model}
We use a random walk process to model the mobility of each user \cite{tabassum2019fundamentals}. {\color{black} This model is used as an example, and the proposed framework can be applied to other mobility models.} {\color{black} At each time slot $t$, user $u$ can choose from five possible movements: 1) remain at the current position, 2) move forward, 3) move backward, 4) move left, or 5) move right.} The probability of each movement is represented by $\boldsymbol{p}_u = \left[ p_{u,1}, p_{u,2}, p_{u,3}, p_{u,4}, p_{u,5} \right]$. Let $\boldsymbol{l}^{\textrm{U}}_{u,t} = \left[ l^{\textrm{U}}_{u,t,1}, l^{\textrm{U}}_{u,t,2} \right]$ be the user's position at time slot $t$. Then, the position of user $u$ at time slot $t+1$ is determined as follows: 
\begin{equation}\label{eq:move}
     \boldsymbol{l}_{u, t+1}^\textrm{U} = \left \{
        \begin{aligned}
            &\left[ l_{u,t,1}^\textrm{U}, l_{u,t,2}^\textrm{U} \right] &&\text{with probability } p_{u,1}, \\
            &\left[ l_{u,t,1}^\textrm{U}, l_{u,t,2}^\textrm{U} + \Delta l \right] &&\text{with probability } p_{u,2}, \\
            &\left[ l_{u,t,1}^\textrm{U}, l_{u,t,2}^\textrm{U} - \Delta l \right] &&\text{with probability } p_{u,3}, \\
            &\left[ l_{u,t,1}^\textrm{U} - \Delta l, l_{u,t,2}^\textrm{U} \right] &&\text{with probability } p_{u,4}, \\
            &\left[ l_{u,t,1}^\textrm{U} + \Delta l, l_{u,t,2}^\textrm{U} \right] &&\text{with probability } p_{u,5}, \\
        \end{aligned}
    \right.
\end{equation}
where $\Delta l$ is the distance that a user moves in a single time slot. 

\subsection{Transmission Model}
The BS serves the users and collect physical network data for RL training. We first discuss the downlink scenario. At each time slot $t$, a user is considered to be served by antenna sector $c$ if the user is located within the corresponding coverage region of antenna sector $c$. Thus, the signal-to-interference-plus-noise ratio (SINR) of user $u$ that is associated with cell $c$ of the BS is given by \cite{asghar2018concurrent} 
\begin{equation}\label{eq:sinr_down}
    \gamma_{uc,t}^\textrm{D} \left( \boldsymbol{\psi}_t^\textrm{T} \right) = \frac{P_0 g_{uct}^\textrm{B} g_{ut}^\textrm{U} \delta_{uct} a \left( d_{uct} \right)^{-\beta}}{N_0 + \sum_{\forall i \in \mathcal{C}/c} P_0 g_{uit}^\textrm{B} g_{ut}^\textrm{U} \delta_{uit} a \left( d_{uit} \right)^{-\beta}}, 
\end{equation}
where $\boldsymbol{\psi}_t^\textrm{T} = \left[ \psi_{1,t}^\textrm{T}, ..., \psi_{C,t}^\textrm{T} \right]$ is a vector of tilt angles of all antennas, with $\psi_{c,t}^\textrm{T}$ being the tilt angle of the antenna in cell $c$; $P_0$ is the transmit power of the BS; $g_{ut}^\textrm{U}$ is the antenna gain of user $u$; $\delta_{uct}$ is the large scale shadowing of the signal transmitted from cell $c$ to user $u$; $a$ is the path loss; $d_{uct}$ is the distance between user $u$ and cell $c$; $\beta$ is the path loss exponent; $N_0$ is the noise power; and $g_{uct}^\textrm{B}$ is the antenna gain of cell $c$ towards user $u$. We assume that each user is equipped with a omnidirectional antenna and the BS is equipped with the directional antennas, $g_{uct}^\textrm{B}$ is expressed as \cite{asghar2018concurrent}
\begin{equation}\label{eq:gain_b}
     g_{uct}^\textrm{B} = 10^{-1.2 \left( \lambda^\textrm{V} \left( \frac{\psi_{uct} - \psi_{ct}^\textrm{T}}{\theta^\textrm{V}} \right)^2 + \lambda^\textrm{H} \left( \frac{\phi_{uct} - \phi_{ct}^\textrm{A}}{\theta^\textrm{H}} \right)^2 \right)},
\end{equation}
where $\lambda^\textrm{V} > 0$ and $\lambda^\textrm{H} > 0$ are the weight parameters of the horizontal and vertical beam patterns of the antenna, $\psi_{uct}$ and $\phi_{uct}$ are the vertical and horizontal angle between user $u$ and the antenna of cell $c$, $\phi_{ct}^\textrm{A}$ is the azimuth of the antenna of cell $c$, and $\theta^\textrm{V}$, $\theta^\textrm{H}$ are the vertical and horizontal beamwidth of the antenna of cell $c$. Based on (\ref{eq:sinr_down}) and (\ref{eq:gain_b}), the data rate of user $u$ associated with cell $c$ at time slot $t$ is 
\begin{equation}\label{eq:r_down}
    r_{uc,t}^\textrm{D} \left( \boldsymbol{\psi}_t^\textrm{T} \right) = W \log_2 \left( 1 + \gamma_{uc,t}^\textrm{D} \left( \boldsymbol{\psi}_t^\textrm{T} \right) \right), 
\end{equation}
where $W$ is the bandwidth. 

Similarly, the SINR of the signal received by antenna of cell $c$ from user $u$ is 
\begin{equation}\label{eq: }
    \gamma_{uc,t}^\textrm{U} \left( \boldsymbol{\psi}_t^\textrm{T} \right) = \frac{P_u g_{uct}^\textrm{B} g_{ut}^\textrm{U} \delta_{uct}^\textrm{U} a \left( d_{uct} \right)^{-\beta}}{N_0 + \sum_{\forall i \in \mathcal{C}/c} P_i g_{uit}^\textrm{B} g_{ut}^\textrm{U} \delta_{uit}^\textrm{U} a \left( d_{uit} \right)^{-\beta}}, 
\end{equation}
where $P_u$ is the transmit power of user $u$, and $\delta_{uct}^\textrm{U}$ is shadowing effect of the signal transmitted from user $u$ to cell $c$. Thus, the data rate of user $u$ transmitting data to the antenna of cell $c$ is 
\begin{equation}
    r_{uc,t}^\textrm{U} \left( \boldsymbol{\psi}_t^\textrm{T} \right) = W \log_2 \left( 1 + \gamma_{uc,t}^\textrm{U} \left( \boldsymbol{\psi}_t^\textrm{T} \right) \right).  
\end{equation}

\subsection{RL for Tilt Angle Adjustment}
To optimize the data rate of the users, we must adjust the tilt angles of the antennas based on the mobility of the users. We consider a discrete-time system, where a time period is divided into time slots, and the network statistics are assumed to remain constant within each time slot. {\color{black} In our considered tilt angle adjustment problem, the BS adjusts the antenna tilt angles every $N$ time slots.} Therefore, the BS does not know the future positions of the users, making the tilt angle adjustment problem difficult to be solved by traditional optimization methods. Here, we assume that the BS uses an RL method to address this tilt angle adjustment problem. \cite{8789640, 10587126}
To train the RL method, the BS needs to collect a series of 
data from the system. We define each transition 
that consists of the state, the action, and the reward of the RL as a basic unit of data collection. 

\subsection{DNT Model}
The data generated by DNT may contain errors due to wireless data transmission and inaccurate DNT synchronization \cite{10906634}. 
Then, a user location $\hat{\boldsymbol{l}}_{u,t}^\textrm{U}$ generated by the DNT 
can be expressed as 
$\hat{\boldsymbol{l}}_{u,t}^\textrm{U} = \boldsymbol{l}_{u,t}^\textrm{U} + \epsilon \left( \boldsymbol{l}_{u,t}^\textrm{U} \right)$, where $\epsilon \left( \boldsymbol{l}_{u,t}^\textrm{U} \right)$ is the DNT data generation error. 
To train the RL method, the BS can either use the data generated by the DNT or the transitions collected from the physical network. Let $e$ be an index of a epoch that consists of several time slots collected for training the RL. We assume that a set $\mathcal{B}_e$ of $\left| \mathcal{B}_e \right|$ transitions are collected to train the RL in epoch $e$, $\mathcal{B}_e^\textrm{P}$ is the subset of $\left| \mathcal{B}_e^\textrm{P} \right|$ transitions collected from the physical network, and $\mathcal{B}_e^\textrm{D}$ is the subset of $\left| \mathcal{B}_e^\textrm{D} \right|$ transitions collected from the DNT, such that $\left| \mathcal{B}_e \right| = \left| \mathcal{B}_e^\textrm{P} \right| + \left| \mathcal{B}_e^\textrm{D} \right|$. We use $\rho_e \in \left[ 0,1 \right]$ to denote the ratio between $\left| \mathcal{B}_e^\textrm{P} \right|$ and $\left| \mathcal{B}_e \right|$ (i.e., $\rho_e = \frac{\left| \mathcal{B}_e^\textrm{P} \right|}{\left| \mathcal{B}_e \right|}$). 
The collection of each transition $b$ of $\mathcal{B}_e^\textrm{P}$ requires $N$ time slots, since the BS changes the antenna tilt angles every $N$ time slots. We denote $t_{e,b}$ as the first time slot that collects transition $b$ in epoch $e$, and $t_{e,b} + N - 1$ as the last time slot. Let $D$ represents the size of data that each user transmits to the BS to generate one transition and update the DNT. Then, the transmission delay of collecting each transition $\tau_{e,b} $ is 
\begin{equation}\label{eq: tau_e_b}
    \tau_{e,b} \left( \boldsymbol{\psi}^\textrm{T}_{t_{e,b}} \right) = \sum_{t = t_{e,b}}^{t_{e,b} + N - 1} \max \frac{D}{r_{uc,t}^\textrm{U} \left( \boldsymbol{\psi}^\textrm{T}_{t_{e,b}} \right)}. 
\end{equation}

\subsection{Problem Formulation}
Given our designed system model, our goal is to maximize the data rate of all users, while meeting the data collection time constraint. 
Our optimization problem includes optimizing ratio $\rho_e$ over a set $\mathcal{E}$ of $E$ training epoch, and the tilt angle $\boldsymbol{\psi}^T_t$ of all antennas over all sets $\mathcal{T}_e$ of $T_e$ time slots. The optimization problem is formulated as 
\begin{subequations}\label{eq:problem}
    \begin{equation}\tag{\theequation}\label{eq:opt_prob}
    \begin{split}
        \max_{\boldsymbol{\psi}^\textrm{T}_t, \rho_e, \forall t \in \mathcal{T}, e \in \mathcal{E}} \sum_{e=1}^E 
        \sum_{t \in \mathcal{T}_e} \sum_{u=1}^U r^\textrm{D}_{uc,t} \left( \boldsymbol{\psi}^\textrm{T}_t \right)
    \end{split}
    \end{equation}
    \begin{flalign}
        &&\text{s.t. } &
        \rho_e  \in \left[ 0,1 \right], e \in \mathcal{E}, && \label{eq:const1} \\
        && & \sum_{b=1}^{\rho_e \left| \mathcal{B}_e \right|} \tau_{e,b} \left( \boldsymbol{\psi}^\textrm{T}_{t_{e,b}} \right) \leq \tau_{max}, e \in \mathcal{E}, && \label{eq:const2} \\
        && & \psi^\textrm{T}_{min} \leq \psi^\textrm{T}_{c,t} \leq \psi^\textrm{T}_{max}, \forall c \in \mathcal{C}, t \in \mathcal{T}_e, e \in \mathcal{E}, && \label{eq:const3}
    \end{flalign}
\end{subequations}
where 
$\tau_{max}$ is the maximum data collection time that the system allows per epsiode, and $\psi^\textrm{T}_{min}, \psi^\textrm{T}_{max}$ are the minimum and maximum value of the tilt angle. 
In problem (\ref{eq:problem}), constraint (\ref{eq:const1}) balances the transitions collected from the physical network and the DNT. {\color{black} Constraint (\ref{eq:const2}) is a constraint of the data collection time of epoch $e$.} Constraint (\ref{eq:const3}) indicates that the BS can only adjust the antenna angle within a certain range. 

{\color{black} Problem (\ref{eq:opt_prob}) is challenging to solve due to the following reasons. First, the relationship between the data collection ratio $\rho_e$ and the sum of data rates of all users over $T_e$ time slots
cannot be explicitly expressed, making it difficult to assess how the data collection policy impacts the RL training performance. Second, the relationship between the data collection time $\tau_e \left( \boldsymbol{\psi}^\textrm{T}_{t_{e,b}} \right)$ and the ratio $\rho_e$ is non-linear and this relationship will also be influenced by unpredictable network conditions, such as the noise power $N_0$. Third, the data collection ratio $\rho_e$ and the tilt angle of all antennas $\boldsymbol{\psi}_t^\textrm{T}$ are coupling since $\rho_e$ affects the training performance of the RL that determines $\boldsymbol{\psi}_t^\textrm{T}$. Finally, the data collected from the DNT contain errors, and the BS cannot know the error distributions, which increases the complexity of solving problem (\ref{eq:problem}).}
To solve problem (\ref{eq:opt_prob}), we propose a hierarchical RL that enables the BS to optimize tilt angle over $N$ steps, and data collection ratio over $E$ training epochs by learning the implicit relationship of $\rho_e$, $\boldsymbol{\psi}_t^\textrm{T}$, and the sum of data rates of all users. 

\section{Proposed Hierarchical Reinforcement Learning}\label{se:solution}
To solve problem (\ref{eq:problem}), we introduce a hierarchical RL framework that integrates robust adversarial loss-RL \cite{NEURIPS2021_dbb42293} with proximal policy optimization (PPO). {\color{black} The robust-RL is used to determine the tilt angle $\boldsymbol{\psi}^\textrm{T}_t$ of all antennas at one step that consists of $N$ time slots.} The PPO is used to optimize the data collection ratio $\rho_e$ for each epoch $e$ using the training information of the first RL.
Compared to traditional single RL algorithms such as deep Q network (DQN), the designed hierarchical RL method optimizes the data collection ratio and the antenna tilt angles at diverse time intervals, allowing the second level PPO to adjust the data collection ratio with a large timescale using the training information provided by the first level robust-RL, and allowing the first level robust-RL focuses on adjusting the antenna tilt angles with a small timescale. 
Compared to traditional policy gradient methods, the robust-RL employs a new loss function that considers the worst-case policy noisy data is used for RL training. By utilizing the new loss function, the robust-RL enhances model robustness against noise in the training data generated by DNT thus efficiently utilizing more DNT data for its model training thus reducing the overhead of collecting data from a physical network. 
Next, we first introduce the components of the first level robust-RL for tilt angle adjustment. Then, we explain the procedure of training the robust-RL. Subsequently, we introduce the components of the second level PPO for data collection ratio optimization. Finally, we analyze the total data collection time and its connection to the reward function of the second level PPO. 
The structure of the proposed method is shown in Fig. \ref{fig:structure}. 
\begin{figure}[t]
  \begin{center}
    \includegraphics[width=9.5cm]{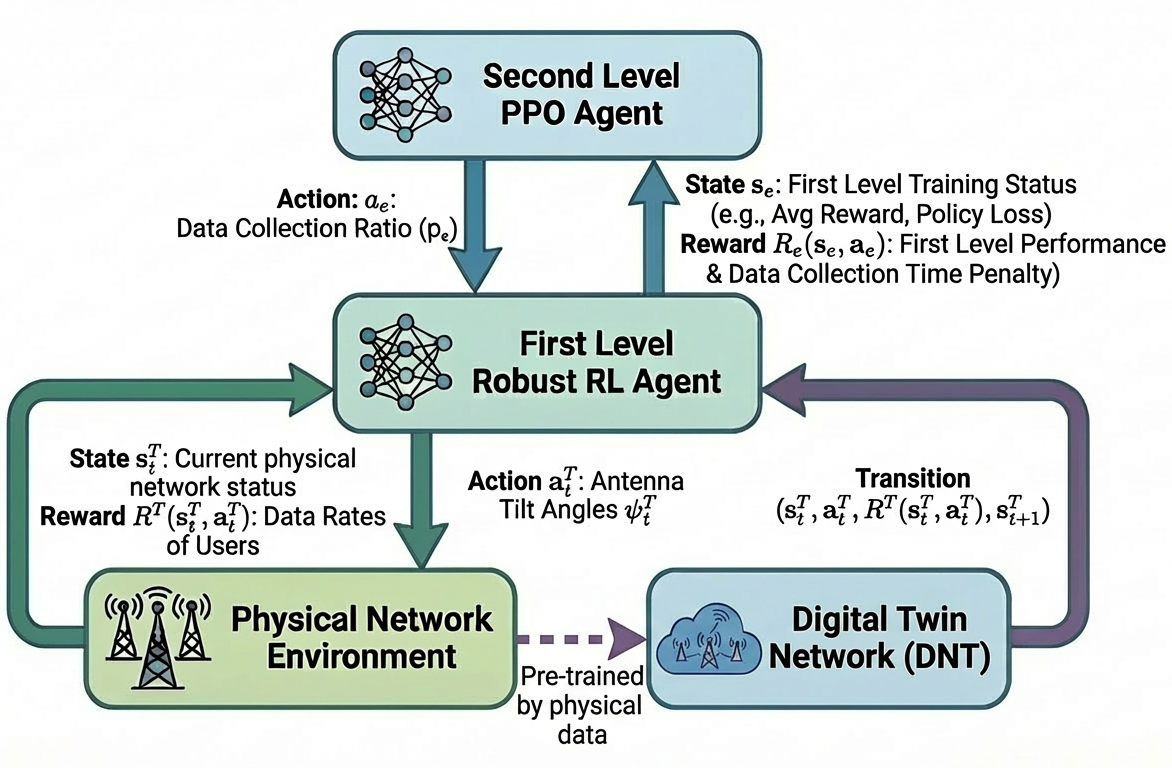}
    \caption{The structure of the proposed hierarchical RL framework. } 
    \label{fig:structure}
    \vspace{-0.5cm}
  \end{center}
\end{figure}

\subsection{The Components of the First Level Robust-RL}
The first level robust-RL has the following seven components: 

\subsubsection{Agent} The agent is the BS. It observes the physical network status and adjust the antenna tilt angles of the BS. 

\subsubsection{State} The state of the BS describes the current physical network status or the DNT status (if the data is collected from the DNT). Here, we use the positions of all users to represent the physical network status or the DNT status. In particular, if the positions of all users are observed from the physical network, the state of the BS at time slot $t$ is $\boldsymbol{s}_t^\textrm{T} = \left[ \boldsymbol{l}_{1,t}^\textrm{U}, ..., \boldsymbol{l}_{U,t}^\textrm{U} \right]$. Otherwise, if the positions of all users are observed from the DNT, $\boldsymbol{s}_t^\textrm{T} = \left[ \hat{\boldsymbol{l}}_{1,t}^\textrm{U}, ..., \hat{\boldsymbol{l}}_{U,t}^\textrm{U} \right]$.  

\subsubsection{Action} The action of the BS determines the antennas tilt angle of the BS. Thus, each action of the BS at time slot $t$ is $\boldsymbol{a}_t^\textrm{T} = \boldsymbol{\psi}_t^\textrm{T}$. If the antenna tilt angles are collected from the DNT, $\boldsymbol{a}_t^\textrm{T} = \hat{\boldsymbol{\psi}}_t^\textrm{T}$

\subsubsection{Reward function} The reward function $R^\textrm{T} \left( \boldsymbol{s}_t^\textrm{T}, \boldsymbol{a}_t^\textrm{T} \right)$ evaluates action $\boldsymbol{a}_t^\textrm{T}$ based on state $\boldsymbol{s}_t^\textrm{T}$. Since the BS adjusts the antenna tilt angles every $N$ time slots, the reward function of the BS is the sum of the data rates of all users over $N$ time slots, which is given by 
 \vspace{-0.1cm}
\begin{equation}\label{eq:reward_tilt}
    R^\textrm{T} \left( \boldsymbol{s}_t^\textrm{T}, \boldsymbol{a}_t^\textrm{T} \right) = \left \{ 
    \begin{aligned}
        &\sum_{n=t}^{t+N} \sum_{u=1}^U r^\textrm{D}_{uc,n} \left( \boldsymbol{\psi}_t^\textrm{T} \right) - \epsilon \left( R^\textrm{T} \right), \\
        & \quad \quad \text{if state $\boldsymbol{s}^\textrm{T}_t$ is collected from the DNT, } \\
        &\sum_{n=t}^{t+N} \sum_{u=1}^U r^\textrm{D}_{uc,n} \left( \boldsymbol{\psi}_t^\textrm{T} \right), \text{otherwise, }
    \end{aligned}
    \right.
\end{equation}
where $\epsilon \left( R^\textrm{T} \right)$ is the error associated with the reward. 

\subsubsection{Value function} The value function estimates the expected reward of the BS at each state $\boldsymbol{s}^\textrm{T}_t$. The BS uses a deep neural network (DNN) with parameter $\boldsymbol{W}^\textrm{V}$ to approximate the value function $V_{\boldsymbol{W}^\textrm{V}} \left( \boldsymbol{s}^\textrm{T}_t \right)$. 

\subsubsection{Advantage function} The advantage function describes the improvement in reward when taking action $\boldsymbol{a}^\textrm{T}_t$ in state $\boldsymbol{s}^\textrm{T}_t$ compared to the expected reward in that state. The advantage function of the BS is implicitly parameterized by the value function parameters, which is given by 
\begin{equation}\label{eq:advantage_tilt}
    A_{\boldsymbol{W}^\textrm{V}} \left( \boldsymbol{s}^\textrm{T}_t, \boldsymbol{a}^\textrm{T}_t \right) = R^\textrm{T} \! \left( \! \boldsymbol{s}_t^\textrm{T}, \boldsymbol{a}_t^\textrm{T} \! \right) \! + \! \lambda^\textrm{T} V_{\boldsymbol{W}^\textrm{V}} \left( \! \boldsymbol{s}^\textrm{T}_{t+1} \! \right) - V_{\boldsymbol{W}^\textrm{V}} \left( \! \boldsymbol{s}^\textrm{T}_t \! \right), 
\end{equation}
where $\lambda^\textrm{T} \in \left( 0,1 \right)$ is the discount factor. 

\begin{table}[t]
    \centering
    \caption{Summary of DNT Data}
    \begin{tabular}{|c|c|c|}
        \hline
        \textbf{Data Type} & \textbf{Physical Network Data} & \textbf{DNT Data} \\ \hline        
        
        \multirow{5}{*}{Static Data} 
        & $\mathcal{U}, U$ & $\mathcal{U}, U$ \\ \cline{2-3} 
        & $\mathcal{C}, C$ & $\mathcal{C}, C$ \\ \cline{2-3} 
        & $P$ & $P$ \\ \cline{2-3} 
        & $\theta^\textrm{V}, \theta^\textrm{H}$ & $\theta^\textrm{V}, \theta^\textrm{H}$ \\ \cline{2-3} 
        & $\lambda^\textrm{V}, \lambda^\textrm{H}$ & $\lambda^\textrm{V}, \lambda^\textrm{H}$ \\
        \hline
        
        \multirow{7}{*}{Runtime Data} 
        & $\boldsymbol{l}_{1,t}^{\textrm{U}}, ..., \boldsymbol{l}_{U,t}^{\textrm{U}}$ & $\hat{\boldsymbol{l}}_{1,t}^{\textrm{U}}, ..., \hat{\boldsymbol{l}}_{U,t}^{\textrm{U}}$ \\ \cline{2-3} 
        & $\psi_{uct}, \phi_{uct}, \psi_{ct}^\textrm{T}, \phi_{ct}^\textrm{A}$ & $\hat{\psi}_{uct}, \hat{\phi}_{uct}, \hat{\psi}_{ct}^\textrm{T}, \hat{\phi}_{ct}^\textrm{A}$ \\ \cline{2-3} 
        & $g_{ut}^\textrm{U}$ & $\hat{g}_{ut}^\textrm{U}$ \\ \cline{2-3} 
        & $\sigma$ & $\hat{\sigma}$ \\ \cline{2-3} 
        & $\gamma_{1,t}, ..., \gamma_{U,t}$ & $\hat{\gamma}_{1,t}, ..., \hat{\gamma}_{U,t}$ \\ \cline{2-3} 
        & $r_{1,t}, ..., r_{U,t}$ & $\hat{r}_{1,t}, ..., \hat{r}_{U,t}$ \\ \cline{2-3} 
        & $R^\textrm{T} \left( \boldsymbol{s}_t^\textrm{T}, \boldsymbol{a}_t^\textrm{T} \right)$ & $\hat{R}_t^\textrm{T}$ \\
        \hline
    \end{tabular}
    \vspace{-0.3cm}
    \label{tab:dnt}
\end{table}

\subsubsection{Policy} 
The policy of the BS is the conditional probability of the BS selecting action $\boldsymbol{a}^\textrm{T}_t$ given state $\boldsymbol{s}^\textrm{T}_t$. The policy is represented by a DNN parameterized by $\boldsymbol{W}^\textrm{P}$ and is denoted as $\pi_{\boldsymbol{W}^\textrm{P}} \left( \boldsymbol{a}^\textrm{T}_t | \boldsymbol{s}^\textrm{T}_t \right)$. In our considered problem, the policy may be affected by the noise in the data collected from the DNT. {\color{black} To improve the algorithm's robustness against the noise, we introduce the worst-case policy as 
\begin{equation}\label{eq:worst_case_policy}
    \hat \pi_{\boldsymbol{W}^\textrm{P}} \left( \boldsymbol{a}^\textrm{T}_t | \boldsymbol{s}^\textrm{T}_t; \rho_e \right) = \left \{
    \begin{aligned}
        \underline{\pi}_{\boldsymbol{W}^\textrm{P}} \left( \boldsymbol{a}^\textrm{T}_t | \boldsymbol{s}^\textrm{T}_t; \rho_e \right), \, & \text{if} A_{\boldsymbol{W}^\textrm{V}} \left( \boldsymbol{s}^\textrm{T}_t, \boldsymbol{a}^\textrm{T}_t \right) \geq 0; \\
        \overline{\pi}_{\boldsymbol{W}^\textrm{P}} \left( \boldsymbol{a}^\textrm{T}_t | \boldsymbol{s}^\textrm{T}_t; \rho_e \right), \, & \text{otherwise},     
    \end{aligned}
    \right. 
\end{equation}
where $\underline{\pi}_{\boldsymbol{W}^\textrm{P}} \left( \boldsymbol{a}^\textrm{T}_t | \boldsymbol{s}^\textrm{T}_t; \rho_e \right)$ is a lower bound probability of the BS selecting the positive-advantage action $\boldsymbol{a}^\textrm{T}_t$ given state $\boldsymbol{s}^\textrm{T}_t$ and the data collection ratios $\rho_e$, 
and $\overline{\pi}_{\boldsymbol{W}^\textrm{P}} \left( \boldsymbol{a}^\textrm{T}_t | \boldsymbol{s}^\textrm{T}_t; \rho_e \right)$ is an upper bound probability of the BS selecting the negative-advantage action $\boldsymbol{a}^\textrm{T}_t$ given state $\boldsymbol{s}^\textrm{T}_t$ and the data collection ratio $\rho_e$. Assuming that the original policy follows the multivariate Gaussian distribution, then, $\underline{\pi}_{\boldsymbol{W}^\textrm{P}} \left( \boldsymbol{a}^\textrm{T}_t | \boldsymbol{s}^\textrm{T}_t; \rho_e \right)$ and $\overline{\pi}_{\boldsymbol{W}^\textrm{P}} \left( \boldsymbol{a}^\textrm{T}_t | \boldsymbol{s}^\textrm{T}_t; \rho_e \right)$ is given by 
\begin{equation}\label{eq:overline_underline_pi}
\begin{split}
    & \underline{\pi}_{\boldsymbol{W}^\textrm{P}} \left( \boldsymbol{a}^\textrm{T}_t | \boldsymbol{s}^\textrm{T}_t; \rho_e \right) = \frac{e^{-\overline{d}/2}}{\left( \left( 2\pi \right)^{k/2} \det \boldsymbol{\Sigma} \right)^{0.5}}, \\
    & \overline{\pi}_{\boldsymbol{W}^\textrm{P}} \left( \boldsymbol{a}^\textrm{T}_t | \boldsymbol{s}^\textrm{T}_t; \rho_e \right) = \frac{e^{-\underline{d}/2}}{\left( \left( 2\pi \right)^{k/2} \det \boldsymbol{\Sigma} \right)^{0.5}}, 
\end{split}
\end{equation}
where $\boldsymbol{\Sigma}$ is the covariance matrix of the Gaussian policy, $k$ is the dimension of the action space, and $\overline{d}, \underline{d}$ represent the maximum and minimum Mahalanobis distances between the action and the policy mean, considering the possible input noises related to $\rho_e$. $\overline{d}, \underline{d}$ can be expressed as 
\begin{equation}\label{eq:m_dis}
\begin{split}
    & \overline{d} = \max_{\boldsymbol{\mu} \in \left[ \underline{\boldsymbol{\mu}}, \overline{\boldsymbol{\mu}} \right]} \left( \boldsymbol{a}^\textrm{T}_t - \boldsymbol{\mu} \right)^T \boldsymbol{\Sigma}^{-1} \left( \boldsymbol{a}^\textrm{T}_t - \boldsymbol{\mu} \right), \\
    & \underline{d} = \min_{\boldsymbol{\mu} \in \left[ \underline{\boldsymbol{\mu}}, \overline{\boldsymbol{\mu}} \right]} \left( \boldsymbol{a}^\textrm{T}_t - \boldsymbol{\mu} \right)^T \boldsymbol{\Sigma}^{-1} \left( \boldsymbol{a}^\textrm{T}_t - \boldsymbol{\mu} \right),
\end{split}
\end{equation}
where $\boldsymbol{\mu}$ is the mean vector of the Gaussian policy, and $\left[ \underline{\boldsymbol{\mu}}, \overline{\boldsymbol{\mu}} \right]$ the range of possible values that the policy mean $\boldsymbol{\mu}$ can take when $\boldsymbol{s}^\textrm{T}_t$ is affected by noise related to the data collection ratio $\rho_e$. In (\ref{eq:worst_case_policy}), when $A_{\boldsymbol{W}^\textrm{V}} \left( \boldsymbol{s}^\textrm{T}_t, \boldsymbol{a}^\textrm{T}_t \right) \geq 0$, indicating action $\boldsymbol{a}^\textrm{T}_t$ is beneficial based on state $\boldsymbol{s}^\textrm{T}_t$, the worst-case policy is given by $\underline{\pi}_{\boldsymbol{W}^\textrm{P}} \left( \boldsymbol{a}^\textrm{T}_t | \boldsymbol{s}^\textrm{T}_t; \rho_e \right)$ as it provides the minimum possible probability of the BS selecting the good action $\boldsymbol{a}^\textrm{T}_t$. Similarly, when $A_{\boldsymbol{W}^\textrm{V}} \left( \boldsymbol{s}^\textrm{T}_t, \boldsymbol{a}^\textrm{T}_t \right) < 0$, indicating the action is unfavorable, the worst-case policy is given by $\overline{\pi}_{\boldsymbol{W}^\textrm{P}} \left( \boldsymbol{a}^\textrm{T}_t | \boldsymbol{s}^\textrm{T}_t; \rho_e \right)$ as it provides the maximum possible probability of the BS selecting the bad action $\boldsymbol{a}^\textrm{T}_t$.} 


\subsection{Training of the First Level Robust-RL}
Next, we introduce the training process of the robust-RL. At each training epoch $e$, the BS collects a set of transitions from the physical network and the DNT based on the data collection ratio $\rho_e$. The collected transitions are saved in a replay buffer. Then, a batch $\mathcal{D}^\textrm{T}_e$ of transitions are sampled from the replay buffer to update the policy network $\boldsymbol{W}^\textrm{P}$ and the value network $\boldsymbol{W}^\textrm{V}$. \cite{lin1992self} The loss function of updating the policy network $\boldsymbol{W}^\textrm{P}$ consists of: 1) the standard PPO loss function that measures the standard performance of the model, and 2) the adversarial loss function that captures the robust performance of the model. The standard PPO loss function is expressed as
\begin{equation}\label{eq:loss_tilt_norm}
\begin{split}
    \mathcal{L}^\textrm{N} \left( \boldsymbol{W}^\textrm{P}_e \right) = \mathbb{E}_{b \sim \mathcal{D}^\textrm{T}_e} \! \left[ \! - \! \min \! \left( \! \frac{\pi_{\boldsymbol{W}^\textrm{P}_e} \! \left( \boldsymbol{a}^\textrm{T}_t | \boldsymbol{s}^\textrm{T}_t \right)}{\pi_{\boldsymbol{W}^\textrm{P}_{e-1}} \! \left( \boldsymbol{a}^\textrm{T}_t | \boldsymbol{s}^\textrm{T}_t \right)} A_{\boldsymbol{W}^\textrm{V}} \! \left( \! \boldsymbol{s}^\textrm{T}_t, \boldsymbol{a}^\textrm{T}_t \!\right), \right. \right. \\ \left. \left. \textrm{clip} \! \left( \! \frac{\pi_{\boldsymbol{W}^\textrm{P}_e} \! \left( \boldsymbol{a}^\textrm{T}_t | \boldsymbol{s}^\textrm{T}_t \right)}{\pi_{\boldsymbol{W}^\textrm{P}_{e-1}} \! \left( \boldsymbol{a}^\textrm{T}_t | \boldsymbol{s}^\textrm{T}_t \right)}, 1-\eta, 1+\eta \right) \! A_{\boldsymbol{W}^\textrm{V}} \! \left( \! \boldsymbol{s}^\textrm{T}_t, \boldsymbol{a}^\textrm{T}_t \! \right) \! \right) \right], 
\end{split}
\end{equation}
where $\textrm{clip} \left( \cdot \right)$ restricts the policy ratio update to a pre-defined range $\left[ 1-\eta, 1+\eta \right]$, with $\eta \in \left( 0,1 \right)$, $\frac{\pi_{\boldsymbol{W}^\textrm{P}_e} \left( \boldsymbol{a}^\textrm{T}_t | \boldsymbol{s}^\textrm{T}_t \right)}{\pi_{\boldsymbol{W}^\textrm{P}_{e-1}} \left( \boldsymbol{a}^\textrm{T}_t | \boldsymbol{s}^\textrm{T}_t \right)}$ is the ratio between the current policy and the policy from the previous epoch. In (\ref{eq:loss_tilt_norm}), $\frac{\pi_{\boldsymbol{W}^\textrm{P}_e} \left( \boldsymbol{a}^\textrm{T}_t | \boldsymbol{s}^\textrm{T}_t \right)}{\pi_{\boldsymbol{W}^\textrm{P}_{e-1}} \left( \boldsymbol{a}^\textrm{T}_t | \boldsymbol{s}^\textrm{T}_t \right)} A_{\boldsymbol{W}^\textrm{V}} \left( \boldsymbol{s}^\textrm{T}_t, \boldsymbol{a}^\textrm{T}_t \right)$ encourages actions with positive advantage while penalizing those with negative advantage, thus improving the policy update. In particular, if action $\boldsymbol{a}^\textrm{T}_t$ results in a positive advantage (i.e., $A_{\boldsymbol{W}^\textrm{V}} \left( \boldsymbol{s}^\textrm{T}_t, \boldsymbol{a}^\textrm{T}_t \right) \geq 0$), and the robust-RL method increases the probability of selecting action $\boldsymbol{a}^\textrm{T}_t$ (i.e., $\pi_{\boldsymbol{W}^\textrm{P}_e} \left( \boldsymbol{a}^\textrm{T}_t | \boldsymbol{s}^\textrm{T}_t \right) > \pi_{\boldsymbol{W}^\textrm{P}_{e-1}} \left( \boldsymbol{a}^\textrm{T}_t | \boldsymbol{s}^\textrm{T}_t \right)$), we have $\frac{\pi_{\boldsymbol{W}^\textrm{P}_e} \left( \boldsymbol{a}^\textrm{T}_t | \boldsymbol{s}^\textrm{T}_t \right)}{\pi_{\boldsymbol{W}^\textrm{P}_{e-1}} \left( \boldsymbol{a}^\textrm{T}_t | \boldsymbol{s}^\textrm{T}_t \right)} A_{\boldsymbol{W}^\textrm{V}} \left( \boldsymbol{s}^\textrm{T}_t, \boldsymbol{a}^\textrm{T}_t \right) \geq 0$ and the loss is negative. If the loss is negative, the optimizer will try to make it more negative since the target of the optimizer is to minimize the loss function. Similarly, if action $\boldsymbol{a}^\textrm{T}_t$ results in a negative advantage (i.e., $A_{\boldsymbol{W}^\textrm{V}} \left( \boldsymbol{s}^\textrm{T}_t, \boldsymbol{a}^\textrm{T}_t \right) < 0$), and the robust-RL method increases the probability of selecting action $\boldsymbol{a}^\textrm{T}_t$, we have $\frac{\pi_{\boldsymbol{W}^\textrm{P}_e} \left( \boldsymbol{a}^\textrm{T}_t | \boldsymbol{s}^\textrm{T}_t \right)}{\pi_{\boldsymbol{W}^\textrm{P}_{e-1}} \left( \boldsymbol{a}^\textrm{T}_t | \boldsymbol{s}^\textrm{T}_t \right)} A_{\boldsymbol{W}^\textrm{V}} \left( \boldsymbol{s}^\textrm{T}_t, \boldsymbol{a}^\textrm{T}_t \right) < 0$ such that the loss is positive.
{\color{black} The adversarial loss is  
\begin{equation}\label{eq:loss_tilt_adv}
\begin{split}
    \mathcal{L}^\textrm{A} \left( \boldsymbol{W}^\textrm{P}_e, \rho_e \right) = \mathbb{E}_{b \sim \mathcal{D}^\textrm{T}_e} \! \left[ \! - \! \min \! \left( \! \frac{\hat{\pi}_{\boldsymbol{W}^\textrm{P}_e} \! \left( \boldsymbol{a}^\textrm{T}_t | \boldsymbol{s}^\textrm{T}_t; \rho_e \right)}{\pi_{\boldsymbol{W}^\textrm{P}_{e-1}} \! \left( \boldsymbol{a}^\textrm{T}_t | \boldsymbol{s}^\textrm{T}_t \right)} A_{\boldsymbol{W}^\textrm{V}} \! \left( \! \boldsymbol{s}^\textrm{T}_t, \boldsymbol{a}^\textrm{T}_t \!\right), \right. \right. \\ \left. \left. \textrm{clip} \! \left( \! \frac{\hat{\pi}_{\boldsymbol{W}^\textrm{P}_e} \! \left( \boldsymbol{a}^\textrm{T}_t | \boldsymbol{s}^\textrm{T}_t; \rho_e \right)}{\pi_{\boldsymbol{W}^\textrm{P}_{e-1}} \! \left( \boldsymbol{a}^\textrm{T}_t | \boldsymbol{s}^\textrm{T}_t \right)}, 1-\eta, 1+\eta \right) \! A_{\boldsymbol{W}^\textrm{V}} \! \left( \! \boldsymbol{s}^\textrm{T}_t, \boldsymbol{a}^\textrm{T}_t \! \right) \! \right) \right]. 
\end{split}
\end{equation}
Compared to the standard PPO loss function in (\ref{eq:loss_tilt_norm}), the adversarial loss function in (\ref{eq:loss_tilt_adv}) uses the worst-case policy defined in (\ref{eq:worst_case_policy}). Hence, the robust-RL must optimize its} {\color{black} performance under the worst-case policy thus improving its robustness. Given (\ref{eq:loss_tilt_norm}) and (\ref{eq:loss_tilt_adv}), the entire loss function used to train the robust-RL is 
\begin{equation}\label{eq:loss_tilt_policy}
    \mathcal{L} \left( \boldsymbol{W}^\textrm{P}_e, \rho_e \right) = \left( 1 - \kappa \right) \mathcal{L}^\textrm{N} \left( \boldsymbol{W}^\textrm{P}_e \right) + \kappa \mathcal{L}^\textrm{A} \left( \boldsymbol{W}^\textrm{P}_e, \rho_e \right), 
\end{equation}
where $\kappa$ a weight parameter that controls the trade-off between $\mathcal{L}^\textrm{N} \left( \boldsymbol{W}^\textrm{P}_e \right)$ and $\mathcal{L}^\textrm{A} \left( \boldsymbol{W}^\textrm{P}_e, \rho_e \right)$.} 
Given the loss function in (\ref{eq:loss_tilt_norm}), the update rule of the policy network $\boldsymbol{W}^\textrm{P}$ is 
\begin{equation}\label{eq:update_pi}
    \boldsymbol{W}^\textrm{P}_e \leftarrow \boldsymbol{W}^\textrm{P}_{e-1} + \alpha^\textrm{P} \nabla_{\boldsymbol{W}^\textrm{P}_{e-1}} \mathcal{L} \left( \boldsymbol{W}^\textrm{P}_{e-1}, \rho_{e-1} \right),  
\end{equation}
where $\alpha^\textrm{P}$ is the learning rate. {\color{black} By optimizing Eq. (\ref{eq:loss_tilt_policy}), the policy update is driven not only by the standard PPO loss but also by the adversarial term evaluated under the worst-case policy. In particular, in each training iteration, the agent is encouraged to improve the performance even under the worst-case scenario induced by the noise of the DNT data. As a result, the robust-RL can effectively mitigate the negative impact of the noise of the DNT data on model training, thus improving the learning stability of the robust-RL.} 
The loss function of updating the value network $\boldsymbol{W}^\textrm{V}$ is 
\begin{equation}\label{eq:loss_tilt_value}
    \mathcal{L}^\textrm{V} \left( \boldsymbol{W}^\textrm{V}_e \right) = \frac{1}{2} \mathbb{E}_{b \sim \mathcal{D}^\textrm{T}_e} \left[ A_{\boldsymbol{W}^\textrm{V}_e} \left( \boldsymbol{s}^\textrm{T}_t, \boldsymbol{a}^\textrm{T}_t \right) \right]^2. 
\end{equation}
Given the loss function, the update rule of the value network $\boldsymbol{W}^\textrm{V}$ is
\begin{equation}\label{eq:update_v}
    \boldsymbol{W}^\textrm{V}_e \leftarrow \boldsymbol{W}^\textrm{V}_{e-1} + \alpha^\textrm{V} \nabla_{\boldsymbol{W}^\textrm{V}_{e-1}} \mathcal{L}^\textrm{V} \left( \boldsymbol{W}^\textrm{V} \right),  
\end{equation}
where $\alpha^\textrm{V}$ is the learning rate. 

The training procedure of the first level RL is summarized in Algorithm \ref{alg:first_level}. 

\begin{algorithm}[!t]
    \small
    \caption{Training procedure of the first level robust-RL}
    \label{alg:first_level}
    \begin{algorithmic}
        \color{black}
        \FOR{each epoch $e$ of the first level robust-RL}
            \FOR{each time step $t$}
                \STATE The first level robust-RL obtains $\boldsymbol{s}^\textrm{T}_t$, with a subset $\mathcal{B}_e^\textrm{P}$ of data collected from the physical network, and a subset $\mathcal{B}_e^\textrm{D}$ of data collected from the DNT
                \STATE The first level robust-RL determines the action $\boldsymbol{a}^\textrm{T}_t$ (i.e., the antenna tilt angle $\boldsymbol{\psi}^\textrm{T}_t$) based on $\boldsymbol{s}^\textrm{T}_t$ and the current policy $\pi_{\boldsymbol{W}^\textrm{P}} \left( \boldsymbol{a}^\textrm{T}_t \mid \boldsymbol{s}^\textrm{T}_t \right)$
                \STATE The first level robust-RL calculates $R^\textrm{T} \left( \boldsymbol{s}_t^\textrm{T}, \boldsymbol{a}_t^\textrm{T} \right)$ based on (\ref{eq:reward_tilt}), and the transmission delay $\tau_{e,b} \left( \boldsymbol{\psi}^\textrm{T}_{t_{e,b}} \right)$ based on (\ref{eq: tau_e_b}). 
            \ENDFOR
            \STATE Record the collected transitions into the replay buffer
            \STATE Update the first level robust-RL based on (\ref{eq:loss_tilt_norm}) and~(\ref{eq:update_v})
        \ENDFOR
    \end{algorithmic}
\end{algorithm}

\subsection{Components of the Second Level PPO}
Next, we introduce the second level PPO for optimizing the data collection ratio \cite{schulman2017proximal}. {\color{black} Here, we use a standard PPO since the second level RL will not be trained by noisy data generated by DNT, and the impact of DNT data noise is handled by the first level robust-RL.} The second level PPO for data collection ratio optimization has the following seven components:  

\subsubsection{Agent} The agent is the BS. It observes the training performance of the robust-RL for tilt angle adjustment, and determines the data collection ratio for each episode $e$. 

\subsubsection{State} The state of the BS describes the robust-RL training performance
in epoch $e-1$. We use the expected sum of reward $\frac{1}{ \left| \mathcal{B}_{e-1} \right| } \sum_{b \in \mathcal{B}_{e-1}} R^\textrm{T} \left( \boldsymbol{s}^\textrm{T}_{t_{e-1,b}}, \boldsymbol{a}^\textrm{T}_{t_{e-1,b}} \right)$ of an episode and the policy network loss $\mathcal{L} \left( \boldsymbol{W}^\textrm{P}_{e-1}, \rho_{e-1} \right)$ to represent the robust-RL training status. 
Thus, the status of the BS at each step $e$ of the second level PPO is $\boldsymbol{s}_e = \left[ \mathcal{L} \left( \boldsymbol{W}^\textrm{P}_{e-1}, \rho_{e-1} \right), \frac{1}{ \left| \mathcal{B}_{e-1} \right| } \sum_{b \in \mathcal{B}_{e-1}} R^\textrm{T} \left( \boldsymbol{s}^\textrm{T}_{t_{e-1,b}}, \boldsymbol{a}^\textrm{T}_{t_{e-1,b}} \right) \right]$. \cite{parker2022automated}

\subsubsection{Action} The action determines the ratio of data collected from the physical network and the DNT. Thus, each action at training epoch $e$ is $a_e = \rho_e$. 

\subsubsection{Reward function} The reward function $R \left( \boldsymbol{s}_e, a_e \right)$ evaluates action $a_e$ based on state $\boldsymbol{s}_e$. $R \left( \boldsymbol{s}_e, a_e \right)$ is given by 
\begin{equation}\label{eq:reward_ratio}
    \begin{split}
        R & \left( \boldsymbol{s}_e, a_e \right) = \frac{1}{ \left| \mathcal{B}_e \right| } \sum_{b \in \mathcal{B}_e} R^\textrm{T} \left( \boldsymbol{s}^\textrm{T}_{t_{e,b}}, \boldsymbol{a}^\textrm{T}_{t_{e,b}} \right) \\
        &- \mathbbm{1}_{ \{ \sum_{b=1}^{\rho_e \left| \mathcal{B}_e \right|} \! \tau_{e,b} \left( \! \boldsymbol{\psi}^\textrm{T}_{t_{e,b}} \! \right) > \tau_{max} \}} \xi \left( \! \sum_{b=1}^{\rho_e \left| \mathcal{B}_e \right|} \! \tau_{e,b} \! \left( \! \boldsymbol{\psi}^\textrm{T}_{t_{e,b}} \! \right) \! - \! \tau_{max} \! \right), 
    \end{split} 
\end{equation}
where $\xi > 0$ is the penalty parameter for the data collection delay in training epoch $e$ of the first level robust-RL exceeding the maximum threshold $\tau_{max}$. The definitions of the policy function, value function, and advantage function are similar to those of the first level robust-RL. 

The training process of the second level PPO is similar to that of the first level robust-RL. In particular, the loss function for updating the second level PPO is the clipped surrogate function in (\ref{eq:loss_tilt_norm}). 
To train the second level RL, we still need to collect a set of transitions. Here, a transition of the second level PPO is collected when the first level robust-RL completes one training epoch. Based on the loss function and the collected transitions, the second level PPO is updated via a stochastic gradient descent method, which is similar to (\ref{eq:update_pi}). The entire training procedure of our proposed hierarchical RL is summarized in Algorithm \ref{alg:second_level}.

\begin{algorithm}[!t]
    \small
    \caption{Training procedure of the proposed hierarchical RL}
    \label{alg:second_level}
    \begin{algorithmic}
        \color{black}
        \FOR{each training epoch of the second level PPO}
            \STATE The second level PPO obtains $\boldsymbol{s}_e$ from the training procedure of the first level robust-RL
            \STATE The second level PPO determines the data collection ratio $\rho_e$ for the first level robust-RL (i.e., $a_e$) based on $\boldsymbol{s}_e$ and the current policy
            \STATE Implement one training epoch $e$ of the first level robust-RL with ratio $\rho_e$ according to Algorithm \ref{alg:first_level} and obtain the transmission delay $\tau_{e,b} \left( \boldsymbol{\psi}^\textrm{T}_{t_{e,b}} \right)$
            \STATE The second level PPO calculates $R \left( \boldsymbol{s}_e, a_e \right)$ based on (\ref{eq:reward_ratio})
        \ENDFOR
        \STATE Record the collected transitions into a reply buffer
        \STATE Update the second level PPO 
    \end{algorithmic}
\end{algorithm}

\section{Convergence, Complexity and Implementation Analysis of the Proposed Algorithm}
In this section, we analyze the convergence, the complexity and the implementation of the proposed hierarchical RL framework. 

\subsection{Convergence Analysis}\label{se:analysis}
To analyze the convergence of our proposed method. We first introduce the following conditions. 


\begin{condition}\label{asp1}
    \emph{The reward function $R \left( \boldsymbol{s}_e, a_e \right)$ of the PPO is bounded. }
\end{condition}

\begin{condition}\label{asp2}
    \emph{
    A constant $L$ exists such that the policy $\pi_{\overline{\boldsymbol{W}}} \left( \boldsymbol{s}_e, a_e\right)$ of the second level PPO is L-smooth with respect to network parameters $\overline{\boldsymbol{W}}$, where $\overline{\boldsymbol{W}}$ represents the parameters of both the policy and value networks. Hence, for any network parameters $\overline{\boldsymbol{W}}_1$, $\overline{\boldsymbol{W}}_2$, we have }
    \begin{equation}
            \lVert \nabla \pi_{\overline{\boldsymbol{W}}_1} \left( \boldsymbol{s}_e, a_e\right) - \nabla \pi_{\overline{\boldsymbol{W}}_2} \left( \boldsymbol{s}_e, a_e\right) \rVert \leq L \lVert \overline{\boldsymbol{W}}_1 - \overline{\boldsymbol{W}}_2 \rVert. 
    \end{equation}
\end{condition}

\begin{condition}\label{asp3}
    \emph{The learning rate $\alpha_k$ used in epoch $k$ for updating the PPO satisfies $\sum_{k=1}^{+ \infty} \alpha_k = + \infty$ and $\alpha_k \to 0$. }
\end{condition} 

\begin{condition}\label{asp4}
    \emph{The PPO advantage function approximated by $\overline{\boldsymbol{W}}$ is 
    bounded 
    by $\varphi_k$ at each 
    epoch $k$. Furthermore, the convergence of the first level robust-RL policy leads to a progressive reduction in $\varphi_k$. }
\end{condition}

Given the above conditions, the convergence of the second level PPO is analyzed in the following corollary. 

\begin{corollary}\label{th1}
    \emph{If the second level PPO meets conditions \ref{asp1} - \ref{asp4}, the gradient norm $\| \nabla V_{\overline{\boldsymbol{W}}_k} \|$ at epoch $k$ satisfies: }
    \begin{equation}
        \begin{split}
            \min_{k} \mathbb{E} \left[ \lVert \nabla V_{\overline{\boldsymbol{W}}_k} \rVert^2 \right] = O \left( \frac{\sum_k \alpha_k^2}{\sum_k \alpha_k} \right) + O \left( {\lim \sup}_{k \to + \infty} \varphi_k \right), 
        \end{split}
    \end{equation}
    \emph{and}
    \begin{equation}
        \begin{split}
            \frac{1}{\sum_k \alpha_k} &\sum_k \alpha_k \mathbb{E} \left[ \lVert \nabla V_{\overline{\boldsymbol{W}}} \rVert^2 \right] = \\
            &O \left( \frac{\sum_k \alpha_k^2}{\sum_k \alpha_k} \right) + O \left( {\lim \sup}_{k \to + \infty} \varphi_k \right), 
        \end{split}
    \end{equation}
\emph{where $O \left( \cdot \right)$ represents the asymptotic upper bound. Hence, the second level PPO converges to a stationary point.} 
\end{corollary}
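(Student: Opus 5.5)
The plan is to treat the second level PPO as a biased stochastic gradient ascent on the value objective $V_{\overline{\boldsymbol{W}}}$ and use the standard descent-lemma argument for nonconvex smooth objectives with inexact gradients, in the style of convergence results for actor-critic methods. Concretely, the update is $\overline{\boldsymbol{W}}_{k+1} = \overline{\boldsymbol{W}}_k + \alpha_k \boldsymbol{g}_k$, where $\boldsymbol{g}_k$ is the sampled clipped-surrogate gradient built from the approximate advantage. I would decompose it as $\boldsymbol{g}_k = \nabla V_{\overline{\boldsymbol{W}}_k} + \boldsymbol{b}_k + \boldsymbol{n}_k$. Here $\boldsymbol{b}_k$ is the bias caused by the critic's advantage approximation error and the nonstationarity induced by the first level robust-RL. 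The term $\boldsymbol{n}_k$ is zero-mean sampling noise. By Condition \ref{asp4}, $\|\boldsymbol{b}_k\|$ is controlled by $\varphi_k$ times a bound on the score function. By Condition \ref{asp1}, the rewards and hence the surrogate gradients are bounded, so $\mathbb{E}\|\boldsymbol{g}_k\|^2 \le G^2$ for some constant $G$.

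First, I would use Condition \ref{asp1} and Condition \ref{asp2} to show that $V_{\overline{\boldsymbol{W}}}$ is $L_V$-smooth. The policy gradient theorem writes $\nabla V$ as an expectation of $Q$ times $\nabla \log \pi$. Bounded rewards bound $Q$, and $L$-smoothness of $\pi$ gives a Lipschitz gradient; the constant $L_V$ depends on $L$, the reward bound and the discount. Then I would apply the descent lemma:
\begin{equation*}
V_{\overline{\boldsymbol{W}}_{k+1}} \ge V_{\overline{\boldsymbol{W}}_k} + \alpha_k \langle \nabla V_{\overline{\boldsymbol{W}}_k}, \boldsymbol{g}_k\rangle - \tfrac{L_V}{2}\alpha_k^2 \|\boldsymbol{g}_k\|^2 .
\end{equation*}
Next, I would take conditional expectations to remove $\boldsymbol{n}_k$ and bound the bias term with Young's inequality, $\langle \nabla V, \boldsymbol{b}_k\rangle \ge -\tfrac12\|\nabla V\|^2 - \tfrac12\|\boldsymbol{b}_k\|^2$. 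This yields
\begin{equation*}
\tfrac{\alpha_k}{2}\mathbb{E}\|\nabla V_{\overline{\boldsymbol{W}}_k}\|^2 \le \mathbb{E}[V_{\overline{\boldsymbol{W}}_{k+1}} - V_{\overline{\boldsymbol{W}}_k}] + c\,\alpha_k \varphi_k + \tfrac{L_V G^2}{2}\alpha_k^2 .
\end{equation*}
If Condition \ref{asp4} is read as a bound on the squared advantage error, the $\varphi_k$ term appears directly; otherwise I would use $\varphi_k^2 \le \bar\varphi\,\varphi_k$, which is allowed because $\varphi_k$ is bounded.

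Summing over $k$ telescopes the value differences. Since $V$ is bounded by $R_{\max}/(1-\lambda)$ under Condition \ref{asp1}, they contribute an $O(1)$ constant. Dividing by $\sum_k \alpha_k$ gives the weighted-average bound
\begin{equation*}
\frac{O(1) + O\left(\sum_k\alpha_k^2\right) + O\left(\sum_k \alpha_k\varphi_k\right)}{\sum_k\alpha_k} .
\end{equation*}
Because $\sum_k\alpha_k = +\infty$ by Condition \ref{asp3}, the $O(1)$ term vanishes relative to the rest. By a Cesàro/Stolz argument, $\sum_k\alpha_k\varphi_k/\sum_k\alpha_k \le \limsup_k \varphi_k + o(1)$. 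This gives the second display of the corollary. The first display follows because a minimum is at most a weighted average. Since $\alpha_k\to0$ implies $\sum_k\alpha_k^2/\sum_k\alpha_k \to 0$, again by Stolz, the bound reduces to $O(\limsup_k\varphi_k)$, which by Condition \ref{asp4} shrinks as the first level converges. This is the claimed approximate stationarity.

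The main obstacle is the bias term. The PPO clipped surrogate gradient is not an unbiased estimate of $\nabla V$: clipping introduces extra bias, and the environment seen by the second level is nonstationary because the first level keeps learning. My first attempt would be to fold both effects into $\boldsymbol{b}_k$. Near the old policy the clip is inactive to first order, so the clipped gradient coincides with the importance-weighted policy gradient up to an $O(\eta)$ bias that can be absorbed into the constant. The nonstationarity would be charged to $\varphi_k$ through Condition \ref{asp4}. Making this absorption precise is where I would need to add boundedness of the score function $\nabla\log\pi$ as an implicit consequence of Condition \ref{asp2} on a compact parameter set.
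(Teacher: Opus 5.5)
Your route is genuinely different from the paper's. The paper never derives the bound. Its proof argues that the second-level PPO satisfies Conditions~\ref{asp1}--\ref{asp4}: bounded rates, Lipschitz MLP layers, a diminishing step size, and $\varphi_k$ shrinking as the first level converges. It then imports both displays from Theorem~3.3 of \cite{jin2023stationary}, a stationary-point result for PPO-Clip. You instead rebuild the mechanism of such a theorem (descent lemma, bias/noise split of the update, Young's inequality, telescoping, weighted Ces\`aro averaging). That shows where the $\sum_k\alpha_k^2/\sum_k\alpha_k$ and $\limsup_k\varphi_k$ terms come from, but it obliges you to handle by hand what the citation handles for free. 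The skeleton is sound. Note only that the $O(1)/\sum_k\alpha_k$ term is absorbed into $O(\sum_k\alpha_k^2/\sum_k\alpha_k)$ because $\sum_k\alpha_k^2\ge\alpha_1^2>0$, not because it ``vanishes relative to the rest''.

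The step that fails as written is your treatment of clipping. A bias of size $O(\eta)$ that does not decay in $k$ cannot be absorbed into a constant. Through your Young step it costs $\tfrac{\alpha_k}{2}O(\eta^2)$ per epoch, i.e.\ $O(\eta^2)\sum_k\alpha_k$ in total. That survives division by $\sum_k\alpha_k$ and leaves an additive $O(\eta^2)$ floor that is absent from the claim. In fact there is no $O(\eta)$ bias. The gradient is evaluated where the current and old parameters coincide, so the ratio is $1\in(1-\eta,1+\eta)$ and both arguments of the $\min$ agree nearby. The clipped-surrogate gradient is therefore exactly $\hat A\,\nabla\log\pi$. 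With several inner steps the ratio moves only by $O(\alpha_k)$, so clipping is eventually inactive and the off-policy bias is $O(\alpha_k)$, which lands in the $\sum_k\alpha_k^2$ term.

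Two further items need the same care.
\begin{itemize}
\item Nonstationarity cannot be charged to $\varphi_k$. Condition~\ref{asp4} bounds the advantage-estimation error, hence only the gradient bias. A second-level MDP that drifts with first-level training changes the objective itself, and telescoping then leaves the unweighted sum $\sum_k\bigl(V^{(k)}(\overline{\boldsymbol{W}}_{k+1})-V^{(k+1)}(\overline{\boldsymbol{W}}_{k+1})\bigr)$. Either treat $V$ as a single fixed objective, as the paper implicitly does by invoking a fixed-MDP theorem, or add an assumption that this drift is summable.
\item A bounded score $\nabla\log\pi$ does not follow from Condition~\ref{asp2}, even on a compact set. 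It blows up where $\pi\to0$, and nothing confines the iterates without a projection. It must be an explicit extra hypothesis, because your smoothness of $V$, the bound $\|\boldsymbol{b}_k\|\le c\,\varphi_k$, and $\mathbb{E}\|\boldsymbol{g}_k\|^2\le G^2$ all rest on it.
\end{itemize}
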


\begin{proof}
    To prove Corollary \ref{th1}, we need to show that the defined second level PPO meets conditions \ref{asp1} - \ref{asp4}. With regard to condition \ref{asp1}, 
    since the downlink data rate $r_{uc,t}^\textrm{D} \left( \boldsymbol{\psi}_t^\textrm{T} \right)$ of each user is bounded, the reward $R^\textrm{T} \left( \boldsymbol{s}_t^\textrm{T}, \boldsymbol{a}_t^\textrm{T} \right)$ of the first level PPO is bounded. 
    From (\ref{eq:reward_ratio}), we see that the reward of the second level PPO is the average reward of the first level robust-RL over epoch $e$ (i.e., $\left| \mathcal{B}_e \right|$ transitions). Hence, the reward of the second level PPO is also bounded such that our designed
    second level PPO satisfies Condition \ref{asp1}. 
    
    To prove that the second level PPO meets Condition \ref{asp2}, we just to prove that the multi layer perceptron (MLP) model is Lipschitz continuous since we use a MLP to approximate the value function of the second level PPO \cite{NEURIPS2018_d54e99a6, NEURIPS2022_0ff54b4e}. Let $\overline{\boldsymbol{w}}_l$ be the weight matrix and $b_l$ be the bias of each layer $l$, followed by an activation function layer $\sigma \left( \cdot \right)$. For each fully connected layer $l$ without the activation function, we can always find a constant $L$ such that the following inequality holds for any pair of inputs $x_1, x_2$ to that layer  
    \begin{equation}
        \| \overline{\boldsymbol{w}}_l x_1 + b_l - \left( \overline{\boldsymbol{w}}_l x_2 + b_l \right) \| \leq L \| x_1 - x_2 \|.
    \end{equation}
    If we consider the activation function, we provide the following proof. Let $y_1$ and $y_2$ be the input of the activation function layer which are close enough. Hence, we can use the first order Taylor expansion to expand $\sigma \left( y_1 \right)$: 
    \begin{equation}
        \sigma \left( y_1 \right) = \sigma \left( y_2 \right) + \frac{ \partial \sigma \left( y_2 \right)}{\partial y_2} \left( y_1 - y_2 \right). 
    \end{equation}
    If we move $\sigma \left( y_2 \right)$ to the left side of the equation, we have 
    \begin{equation}
        \sigma \left( y_1 \right) - \sigma \left( y_2 \right) = \frac{ \partial \sigma \left( y_2 \right)}{\partial y_2} \left( y_1 - y_2 \right). 
    \end{equation}
    For common activation functions such as sigmoid, tanh, and ReLU, their derivatives are uniformly bounded. Therefore, the term $ \frac{\partial \sigma \left( y_2 \right)}{\partial y_2}$ can be treated as a constant $Y$. As a result, we can still find a constant $L$ such that
    \begin{equation}
        \begin{split}
            \| \sigma \left( y_1 \right) - \sigma \left( y_2 \right) \| &= \| Y \left( y_1 - y_2 \right) \| \leq L \| y_1 - y_2 \|. 
        \end{split}
    \end{equation}
    Hence, the second level PPO satisfies Condition \ref{asp2}. 

    To meet Condition \ref{asp3}, the second level PPO can use a diminishing learning rate such that the learning rate $\alpha_k$ satisfies $\sum_{k=1}^{+ \infty} \alpha_k = + \infty$ and $\alpha_k \to 0$. This learning rate setting method ensures that the PPO continues to make progress while gradually reducing the step size, which is a standard requirement for convergence in stochastic approximation frameworks.  

    With regard to Condition \ref{asp4}, since we use neural networks to approximate the value function and compute the advantage function in the second level PPO, there exists an estimate error $\varphi_k > 0$ between the estimated and actual advantage function. Since the states, actions, and rewards in the second level PPO depend on the training process of the first level robust-RL, as the robust-RL gradually converges, the estimation error $\varphi_k$ of the second level PPO value function also decreases progressively and eventually approaches a minimum value. Hence, the second level PPO meets Condition \ref{asp4}. 

    When second level PPO meets Conditions \ref{asp1} - \ref{asp4}, it converges to a stationary point based on Theorem 3.3 from \cite{jin2023stationary}. This completes the proof. 
\end{proof}
{\color{black} From Corollary \ref{th1}, we see that as $k$ increases, the terms $O \left( \frac{\sum_k \alpha_k^2}{\sum_k \alpha_k} \right)$ and $O \left( {\lim \sup}_{k \to + \infty} \varphi_k \right)$ decrease. Hence, $\min_{k} \mathbb{E} \left[ \lVert \nabla V_{\overline{\boldsymbol{W}}_k} \rVert^2 \right]$ and $\frac{1}{\sum_k \alpha_k} \sum_k \alpha_k \mathbb{E} \left[ \lVert \nabla V_{\overline{\boldsymbol{W}}} \rVert^2 \right]$ decrease. When the gradient becomes sufficiently small, the updates to the second level PPO model are negligible. As a result, the second level PPO converges. }

\subsection{Complexity Analysis}
Next, we analyze the complexity of our proposed method. Both the first level robust-RL and the second level PPO employ a structure consisting of a policy network and a value network, where both networks are implemented as fully connected neural networks. We assume that the first level robust-RL policy network contains $L^{\textrm{P}_1}$ layers, with $W^{\textrm{P}_1}_l$ neurons in layer $l$, and the value network contains $L^{\textrm{V}_1}$ layers, with $W^{\textrm{V}_1}_l$ in layer $l$. Following \cite{10750404, 11270936}, we consider the number of scalar multiplications per update as the main computational cost. Thus, the computational complexity of the first level robust-RL training process is 
\begin{equation}\label{eq:cplx_1}
    \mathcal{O} \left( \sum_{e=1}^{E_1} \left( \left| \mathcal{B}_e \right| \left(\sum_{l=1}^{L^{\textrm{P}_1}} W^{\textrm{P}_1}_{l-1} W^{\textrm{P}_1}_{l} + \sum_{l=1}^{L^{\textrm{V}_1}} W^{\textrm{V}_1}_{l-1} W^{\textrm{V}_1}_{l} \right) \right) \right), 
\end{equation}
where $E_1$ is the number of epochs of training the first level robust-RL. Given (\ref{eq:cplx_1}), the computational complexity of the second level PPO depends on that of the first level robust-RL, since the second level PPO determines the data collection behavior used in each epoch of the first level robust-RL training process. Similar to the first level robust-RL, we assume that the second level PPO includes a policy network with $L^{\textrm{P}_2}$ layers and $W^{\textrm{P}_2}_l$ neurons in each layer $l$, and a value network with $L^{\textrm{V}_2}$ layers and $W^{\textrm{V}_2}_l$ neurons in each layer $l$. Then, the computational complexity of training the second level PPO is 
\begin{equation}\label{eq:cplx_2}
    \begin{split}
        \mathcal{O} \left( \sum_{e=1}^E \left( \left| \mathcal{B}_e \right| \left(\sum_{l=1}^{L^{\textrm{P}_1}} W^{\textrm{P}_1}_{l-1} W^{\textrm{P}_1}_{l} + \sum_{l=1}^{L^{\textrm{V}_1}} W^{\textrm{V}_1}_{l-1} W^{\textrm{V}_1}_{l} \right) \right) \right. \\ 
        \left. + E \left(\sum_{l=1}^{L^{\textrm{P}_2}} W^{\textrm{P}_2}_{l-1} W^{\textrm{P}_2}_{l} + \sum_{l=1}^{L^{\textrm{V}_2}} W^{\textrm{V}_2}_{l-1} W^{\textrm{V}_2}_{l} \right) \right). 
    \end{split}
\end{equation}
{\color{black} In (\ref{eq:cplx_2}), the first term represents the complexity of the first level robust-RL training, and the second term represents the complexity of the second level PPO collecting its training data from the first level robust-RL training process. If the two level networks have comparable sizes, the dominant term is the first term, $ \! \sum_{e=1}^E \! |\mathcal{B}_e| \! \left( \! \sum_{l=1}^{L^{\textrm{P}_1}} \! W^{\textrm{P}_1}_{l-1} \! W^{\textrm{P}_1}_{l} \! + \! \sum_{l=1}^{L^{\textrm{V}_1}} \!  W^{\textrm{V}_1}_{l-1} \! W^{\textrm{V}_1}_{l} \! \right)$, since it scales with the batch size $|\mathcal{B}_e|$ in addition to the network sizes. In contrast, the second term, $E\left(\sum_{l=1}^{L^{\textrm{P}_2}} W^{\textrm{P}_2}_{l-1}W^{\textrm{P}_2}_{l}+\sum_{l=1}^{L^{\textrm{V}_2}} W^{\textrm{V}_2}_{l-1}W^{\textrm{V}_2}_{l}\right)$, only scales linearly with $E$ and the second level network sizes. Therefore, under the considered settings where $|\mathcal{B}_e|\gg 1$, the first term dominates the overall complexity.} 

{\color{black} \subsection{Implementation of the Proposed Method}
Next, we analyze the implementation of our proposed method, which includes: 1) determining the antenna tilt angles by the first level robust-RL, and 2) selecting the ratio $\rho_e$ of data to be collected from the physical network and the DNT by the second level PPO during each robust-RL training epoch. To implement the first level robust-RL, first, the BS must be aware of several parameters of the physical network, including the number $U$ of users, the number $C$ of cells, the location of the BS, the transmit power $P_0$, the antenna gain $g^\textrm{U}_{ut}$ of each user $u$, the pass loss exponent $\beta$, the vertical and horizontal beamwidths $\theta^\textrm{V}$ and $\theta^\textrm{H}$, and the bandwidth $B$. The BS also needs to collect training data in form of tuples $\left( \boldsymbol{s}^\textrm{T}_t, \boldsymbol{a}^\textrm{T}_t, R^\textrm{T} \left( \boldsymbol{s}^\textrm{T}_t, \boldsymbol{a}^\textrm{T}_t\right) \right)$. These data can be collected either from the physical network or the DNT. 
The number $\left| \mathcal{B}_e \right|$ of tuples that the BS must collect to train the first level robust-RL is determined by the second level PPO. To implement the second level PPO, the BS must observe training feedback from the first level robust-RL, such as the ratio $\rho_e$, the average episode rewards, and the policy network updates. It also needs to collect training data in form of tuples $\left( \boldsymbol{s}_e, a_e, R \left( \boldsymbol{s}_e, a_e \right) \right)$. Based on this information, the PPO dynamically adjusts the data collection ratio $\rho_e$ to balance the robustness and efficiency of the first level robust-RL training. }

\section{Simulation Results}\label{se:simulation}
For simulations, we consider a wireless network where the BS is located at $\left[ 0, 0 \right]$ and the coverage radius of the network is 50 meters. The BS is equipped with three directional antennas. Each antenna covers a sector shaped cell and serves the users located within that cell. $U=10$ users are randomly distributed within the BS coverage. We assume that the DNT data generation error follows a uniform distribution over a bounded interval \cite{pinto2017asymmetric}. For example, $\epsilon \left( \boldsymbol{l}_{u,t}^\textrm{U} \right)\in \left[ -\varepsilon, \varepsilon \right]$, with $\varepsilon$ being the bound of the DNT data generation error. Other parameters used in the simulations are listed in Table \ref{tab:sys_params}. {\color{black} For comparison purposes, we consider two baselines: 
\begin{itemize}
    \item In the first baseline, antenna tilt angles are adjusted by the robust-RL, as done in our proposed method. The data collection ratio of training the first level robust-RL is randomly determined. This baseline is used to demonstrate the impact of the second level PPO on the performance of the first level PPO. 
    \item In the second baseline, antenna tilt angles are adjusted by a vanilla PPO, and the data collection ratio of training the first level PPO is determined by another vanilla PPO. 
    This baseline is used to evaluate the performance gain achieved by the proposed first level robust-RL. 
\end{itemize}}

\begin{table}[!t]
    \vspace{-0.2cm}
    \caption{System Parameters {\color{black}\cite{zhang2023optimization, 8638796, NEURIPS2021_dbb42293}}}
    \label{tab:sys_params}
    \centering
    \begin{tabular}{|c|c|c|c|c|c|}
        \hline
        \textbf{Param} & \textbf{Value} & \textbf{Param} & \textbf{Value} &
        \textbf{Param} & \textbf{Value} \\
        \hline
        $U$ & 10 & $C$ & 3 & $P_0$ & 1 \\
        \hline
        $g^\textrm{U}_{ut}$ & 1 & $\delta_{uct}$ & 1 & $a$ & 1 \\
        \hline 
        $\beta$ & 2 & $N_0$ & $1 \times 10^{-6}$ & $W$ & 1 \\
        \hline
        $\theta^{\textrm{V}}$ & $30^\circ$ & $\theta^{\textrm{H}}$ & $120^\circ$ & $N$ & 3 \\
        \hline
        $\phi^\textrm{A}_{1t}$ & $0^\circ$ & $\phi^\textrm{A}_{2t}$ & $120^\circ$ & $\phi^\textrm{A}_{3t}$ & $240^\circ$ \\
        \hline
        $D$ & 1 & $\tau_{max}$ & 150 & $\psi^\textrm{T}_{min}$ & $0^\circ$ \\
        \hline
        $\psi^\textrm{T}_{max}$ & $90^\circ$ & $\lambda^\textrm{T}$ & 0.99 & $\alpha^\pi$ & $3 \times 10^{-3}$ \\
        \hline
        $\alpha^\textrm{V}$ & $3 \times 10^{-4}$ & $\xi$ & 0.005 & & \\
        \hline
    \end{tabular}
    \vspace{-0.3cm}
\end{table}  
Fig. \ref{fig:overhead} shows the delay of the BS collecting data from the physical network during the first level RL training process. From Fig. \ref{fig:overhead}, we see that both our proposed method and the second baseline can effectively reduce the time delay. This is because the second level PPO in both proposed method gradually learns an efficient data collection policy that efficiently uses the DNT data for first level RL training. We can also see that our proposed method reduces the data collection delay by up to 28.01\%, compared to the second baseline. This is because the first level robust-RL in our proposed method effectively mitigates the adverse effect of the noisy training data, thus can exploit more noisy samples collected from the DNT without degrading the training performance. 

\begin{figure}[!t]
  \begin{center}
    \includegraphics[width=9cm]{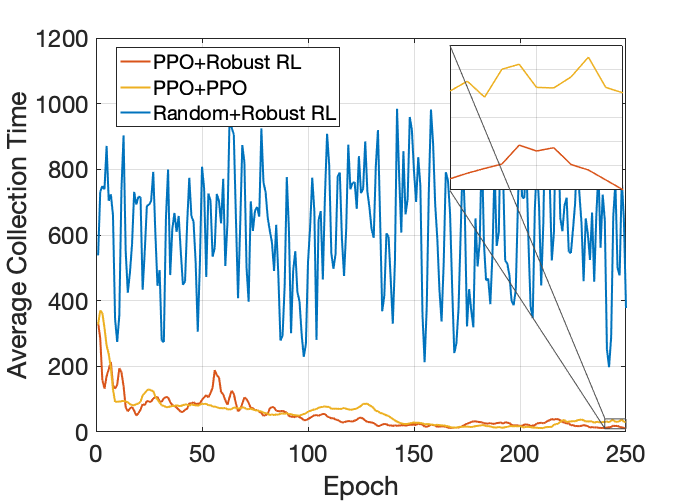}
    \vspace{-0.3cm}
    \caption{\small{The delay of collecting data from the physical network as the number epochs varies. }} 
    \label{fig:overhead}
    \vspace{-0.7cm}
  \end{center}
\end{figure}


In Fig. \ref{fig:converge}, we show the convergence performance of the second level PPO in the proposed methods and two baselines. From Fig. \ref{fig:converge}, we see that, as the number of training epochs increases, the average episode return of the second level PPO in both methods increases. This implies that the second level RL effectively learns a policy that balances the data collection from the physical network and the DNT. We also see that the second level PPO in our proposed method outperforms the second baseline with a 77.81\% higher average episode return. This is due to the fact that the robust-RL in our proposed method resists noise in the training data and achieves better training performance with higher episode rewards compared with the standard PPO. 

\begin{figure}[!t]
  \begin{center}
    \includegraphics[width=9.5cm]{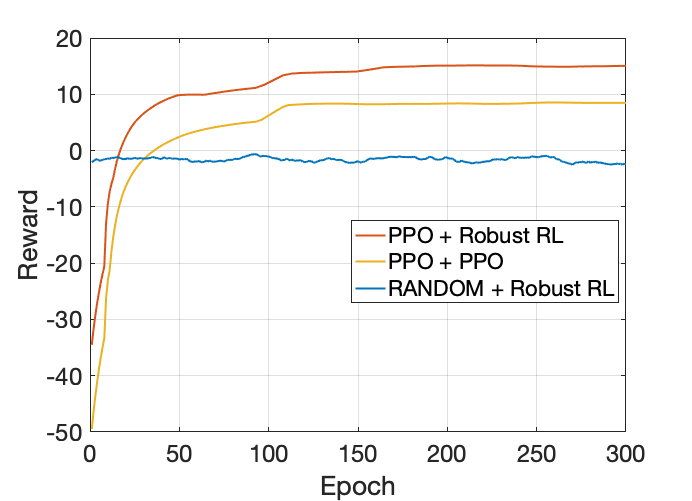}
    \vspace{-0.3cm}
    \caption{\small{The convergence of the second level PPO. }} 
    \label{fig:converge}
  \end{center}
\end{figure}

In Fig. \ref{fig:converge_1}, we show the convergence performance of the first level RL in the proposed methods and the second baseline. From Fig. \ref{fig:converge_1}, we can see that the average episode return of both the robust-RL and the vanilla PPO increase as the number of training epochs increases. This is because both methods can learn the relationship between antenna tilt angles and user dynamics. We can also observe that the robust-RL, which we use as the first level RL in our proposed method, improves the average episode reward by 38.51\% compared with the vanilla PPO. This improvement can be attributed to the fact that the robust-RL exhibits higher robustness to noise in the training data, thus can achieve better training performance compared with the vanilla PPO under noisy scenarios. 

\begin{figure}[!t]
  \begin{center}
  \color{black}
    \includegraphics[width=9cm]{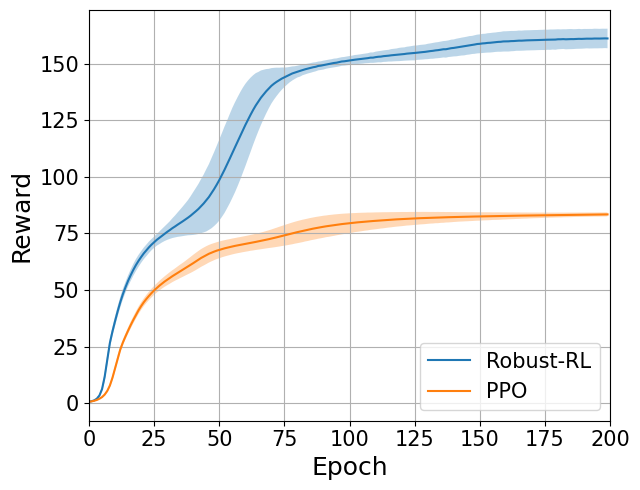}
    \vspace{-0.3cm}
    \caption{\small{The convergence of the first level RL. }} 
    \label{fig:converge_1}
    \vspace{-0.7cm}
  \end{center}
\end{figure}


Fig. \ref{fig:epsilon} shows the performance of the second level PPO under different error level $\varepsilon$. From Fig. \ref{fig:epsilon} we see that the second level PPO can converge under different error levels. This is because the first level robust-RL incorporates a worst-case policy in the adversarial loss term of the loss function. By optimizing the worst case performance induced by the DNT data errors, the antenna adjustment policy is robust to those errors. Then, the second level PPO can receive more reliable data from the first level robust-RL training procedure. Fig. \ref{fig:epsilon} also shows that the second level PPO with $\varepsilon = 0.05$ achieves a higher average episode reward compared to the second level PPO with $\varepsilon = 0.25$. This is because when the error level decreases from 0.25 to 0.05, the first level robust-RL can learn a more accurate antenna adjustment policy, which in turn provides higher average episode rewards to the second level PPO.  

\begin{figure}[!t]
  \begin{center}
    \includegraphics[width=9.5cm]{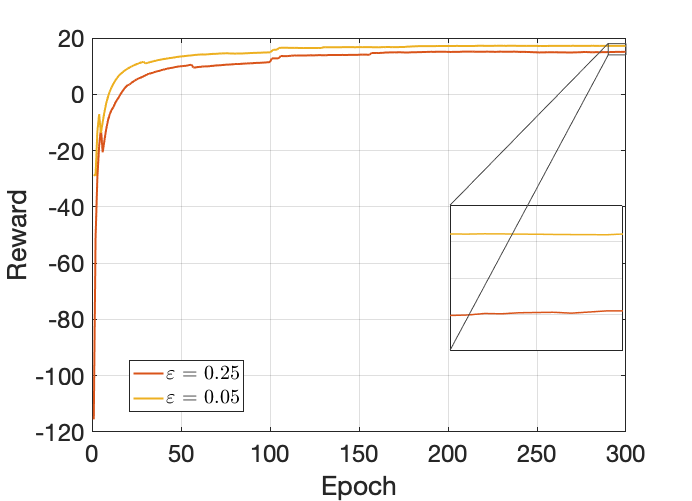}
    \vspace{-0.3cm}
    \caption{\small{Performance of the second level PPO under different DNT error levels $\varepsilon$. }} 
    \label{fig:epsilon}
    \vspace{-0.7cm}
  \end{center}
\end{figure}


In Fig. \ref{fig:user_num}, we show how the episode reward achieved by the second level RL as the number of users varies. From Fig.~\ref{fig:user_num}, we see that the episode reward at the convergence of the proposed method and the second baseline increase as the number of users increases. This is due to the fact that the second level RL can effectively adjust the data collection ratio and maintain a balance between the learning efficiency of the first level RL and the time delay introduced by data collection. Fig. \ref{fig:user_num} also shows that our proposed method outperforms the second baseline (PPO + PPO) by up to 73.99\%. This is because the first level robust-RL incorporates an adversarial loss in its loss function to capture the impacts of DNT data errors on the antenna adjustment policy training. Hence, the designed RL method not only optimizes the antenna adjustment policy, but also optimizes the policy robustness to DNT data errors. 

\begin{figure}[!t]
  \begin{center}
    \includegraphics[width=9.5cm]{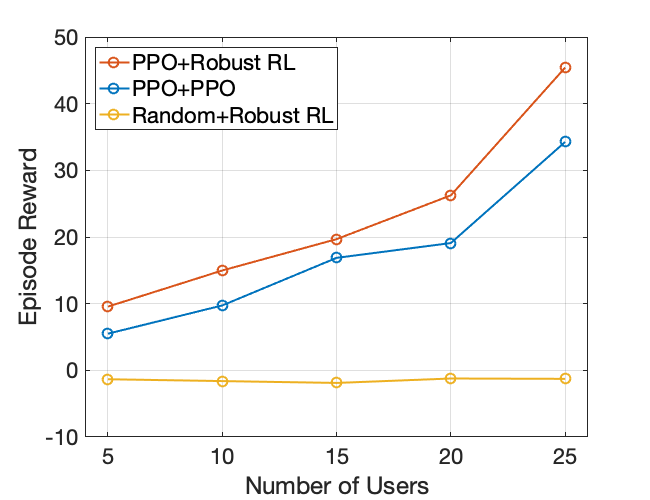}
    \vspace{-0.3cm}
    \caption{\small{The optimal performance of the second level RL as the number of users varies. }} 
    \label{fig:user_num}
    \vspace{-0.7cm}
  \end{center}
\end{figure}

In Fig. \ref{fig:k}, we show the convergence of the first level robust-RL as the value $\kappa$ of the weight that balances the importance between the standard PPO loss and the adversarial loss. From Fig. \ref{fig:k}, we see that the average episode reward of the first level robust-RL increases as $\kappa$ increases. 
This is because the loss function focuses more on the adversarial loss term, thus improving the learning performance of the first level robust-RL under noisy DNT data. 

\begin{figure}[!t]
  \begin{center}
    \includegraphics[width=9.5cm]{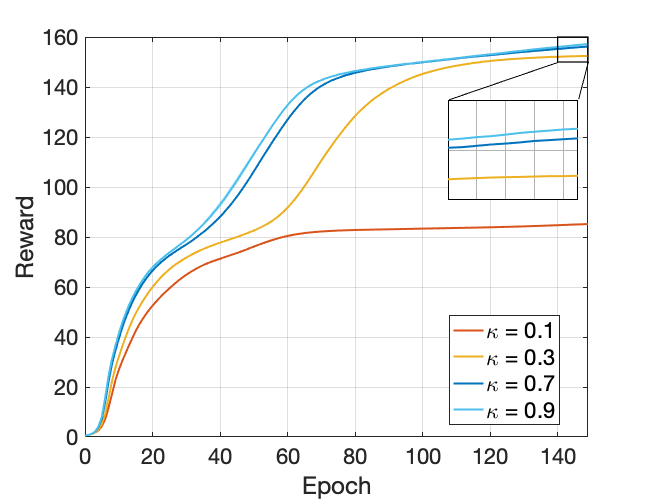}
    \vspace{-0.3cm}
    \caption{\small{The performance of first level robust-RL as value of $\kappa$ varies. }} 
    \label{fig:k}
    \vspace{-0.7cm}
  \end{center}
\end{figure}

In Fig. \ref{fig:xi}, we show the convergence of the second level PPO under different values of the weight $\xi$ of the penalty on the data collection time delay in the second level PPO reward function. 
In Fig. \ref{fig:xi}, we observe that when $\xi$ increases, the convergence speed of the designed method decreases. This is because when $\xi$ increases, the penalty introduced by the excessive delay increases. 
Hence, the gradient updates are primarily driven by the penalty and aim to meet the constraint (\ref{eq:const2}) instead of maximizing the objective function in (\ref{eq:problem}). 
This scenario is more pronounced in the early training stage. For example, in Fig. \ref{fig:xi}, during the first 100 epochs, the second level PPO with $\xi = 0.1$ has the largest penalty among all tested $\xi$ values, and hence, its convergence speed is the lowest. 

\begin{figure}[!t]
  \begin{center}
    \includegraphics[width=9.5cm]{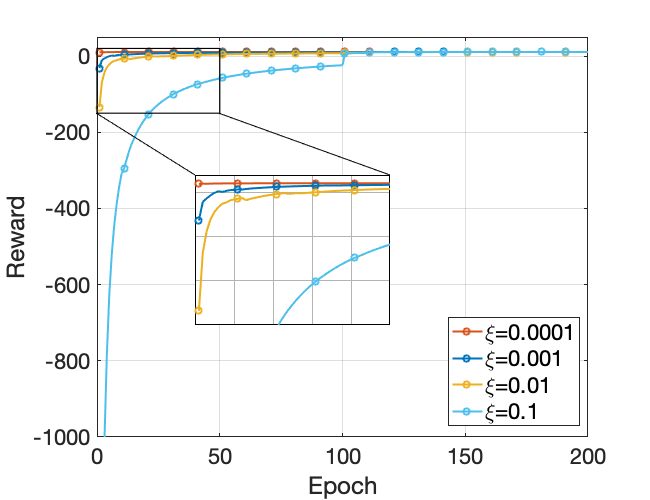}
    \vspace{-0.3cm}
    \caption{\small{The optimal performance of the second level RLs as the number of users varies. }} 
    \label{fig:xi}
    \vspace{-0.7cm}
  \end{center}
\end{figure}
 
\section{Conclusion}\label{se:conclusion}
In this paper, we have proposed a novel DNT assisted wireless DL model training framework that enables central controllers to dynamically select the training data from both the physical network and the DNT to optimize the training
process of a DL model used for physical network performance optimization according to network dynamics and DL training settings. 
In particular, since the data collected from the physical network is more accurate but incurs higher communication overhead compared to the data collected from the DNT, the BS must determine the data collection ratio to balance RL training performance and communication delay caused by physical network data collection. We have formulated the joint tilt angle adjustment and the data collection ratio selection as an optimization problem, aiming to maximize the data rates of all users, while considering the physical network data collection delay. To solve this optimization problem, we have proposed a hierarchical RL framework that integrates the robust adversarial loss-RL and the PPO. In our proposed framework, the first level robust-RL optimizes the antenna tilt angles, and the second level PPO dynamically adjusts the data collection ratio to improve the training performance of the first level robust-RL. 
Simulation results have shown that our proposed method significantly reduces the physical network data collection delay compared to the baselines. 


\bibliographystyle{IEEEbib}
\bibliography{references1}
\end{document}